\newcommand{\resultspath}{.}
\newcommand{\REAL}{\ensuremath{\mathbb R}}
\DeclareMathOperator{\cost}{cost}
\DeclareMathOperator{\costf}{faircost}
\DeclareMathOperator{\costc}{colcost}
\newtheorem{theorem}{Theorem}
\newtheorem{definition}[theorem]{Definition}
\newtheorem{lemma}[theorem]{Lemma}
\newtheorem{question}[theorem]{Question}
\newtheorem{observation}[theorem]{Observation}
\newtheorem{proposition}[theorem]{Proposition}
\newtheorem{corollary}[theorem]{Corollary}
\begin{document}
\title{Fair Coresets and Streaming Algorithms for Fair k-means} 
\author{
  Melanie Schmidt\thanks{Department of Computer Science, Rheinische Friedrich-Wilhelms-Universität Bonn, 53111 Bonn, Germany, \texttt{melanieschmidt@uni-bonn.de}} \and Chris Schwiegelshohn\thanks{Department of Computer, Control and Management Engineering Antonio Ruberti,	Sapienza University of Rome, 00185 Roma RM, Italy,	\texttt{schwiegelshohn@diag.uniroma1.it}} \and Christian Sohler\thanks{Department of Computer Science,	TU Dortmund, 44227 Dortmund, Germany, \texttt{christian.sohler@tu-dortmund.de}}
}
\maketitle

\begin{abstract}
We study fair clustering problems as proposed by Chierichetti et al.~\cite{CKLV17}. Here, points have a sensitive attribute and all clusters in the solution are required to be balanced with respect to it (to counteract any form of data-inherent bias).
Previous algorithms for fair clustering do not scale well. 

We show how to model and compute so-called coresets for fair clustering problems, which can be used to significantly reduce the input data size.
We prove that the coresets are composable~\cite{IMMM14} and show how to compute them in a streaming setting.
Furthermore, we propose a variant of Lloyd's algorithm that computes fair clusterings and extend it to a fair $k$-means++ clustering algorithm. We implement these algorithms and provide empirical evidence that the combination of our approximation algorithms and the coreset construction yields a scalable algorithm for fair $k$-means clustering.
\end{abstract}

\thispagestyle{empty}
\newpage
\setcounter{page}{1}

\section{Introduction}

\begin{quote}
\emph{Our challenge is to support growth in the 
beneficial use of big data while ensuring that it 
does not create unintended discriminatory 
consequences.} \\
(Executive Office of the President, 2016 \cite{executive2016big})
\end{quote}

As the use of machine learning methods becomes more and more common in many areas of daily life ranging from automatic display of advertisements on webpages to mortgage approvals, we are faced with the question whether the decisions made by these automatic systems are \emph{fair}, i.e. free of biases by race, gender or other sensitive attributes. 
While at first glance it seems that replacing human decisions by algorithms will remove any kind of bias as algorithms will only decide based on the underlying data, the problem is that the training data may contain all sorts of biases. 
As a result, the outcome of an automated decision process may still contain these biases.

Recent findings in algorithmically generated results strengthen this concern. 
For example, it has been discovered that the COMPAS software that is used to predict the probability of recidivism is much more likely to assign an incorrect high risk score to a black defendant and low risk scores to a white defendant \cite{ALMK16}. 
This raises the general question how we can guarantee fairness in algorithms. 

This questions comes with several challenges. 
The first challenge is to formally define the concept of fairness. 
And indeed, it turns out that there are several ways to define \emph{fairness} which result in different optimal solutions \cite{CPFGH17}, and it has recently been shown that they cannot be achieved simultanuously unless the data has some very special (unlikely) structure \cite{KMR17}. 

In this paper we build upon the recent work by Chiericetti et al.~\cite{CKLV17} and consider fairness of clustering algorithms using the concept of \emph{disparate impact}, which is a notion of (un)fairness introduced to computer science by Feldman et. al.~\cite{FFMSV15}. 
Disparate impact essentially means that the result of a machine learning task does not correlate strongly with sensitive attributes such as gender, race sexual or religious orientation.
More formally and illustrated on the case of a single binary sensitive attribute $X$ and cluster variable $C$, a clustering algorithm does not show disparate impact if it satisfies the $p\%$ rule (a typical value for $p$ is $0.8$) stating that $\frac{\Pr\{C=i | X=0\}}{\Pr\{C=i | X=1\}} \le p.$
If we assume that both attribute values appear with the same frequency, then by Bayes Theorem 
the above translates to having at most $p\%$ points with a specific attribute value in each cluster. 

Chierichetti et. al. model fairness based on the disparate impact model in the following way.
They assume that every point has one of two colors (red or blue). 
If a set of points $C$ has $r_C$ red and $b_C$ blue points, then they define its \emph{balance} to be $\min(\frac{r_C}{b_C},\frac{b_C}{r_C})$. 
The overall balance of a clustering is then defined as the minimum balance of any cluster in it. 
Clusterings are then considered fair if their overall balance is close to $1/2$.


An algorithm ensuring fairness has to proceed with care; as mentioned before an algorithm that 
obliviously optimizes an objective function may retain biases inherent in the training set.
Chierichetti et al. avoid this by 
 identifying a set of fair micro-clusters via a suitably chosen perfect matching and running the subsequent optimization on the microclusters. 
This clever technique has the benefit of always computing a fair clustering, as the union of fair micro clusters is necessarily also fair.
However, the min-cost perfect matching is computationally expensive, and it needs random access to the data, which may be undesirable. 
This raises the following question:
\begin{question}
Is is possible to perform a fair data analysis efficiently, even when the size of the data set renders random-access unfeasible?
\end{question} 

\paragraph{Our contribution}
We address the issue of scaling algorithms by investigating \emph{coresets} for fair clustering problems, specifically for $k$-means. 
Given an input set $P$ in $d$ dimensional Euclidean space, the problem consists of finding a set of $k$ centers $c_1,\dots,c_k$ and a partition of $P$ into $k$ sets $C_1,\dots,C_k$ such that $\sum_{i=1}^k \sum_{p \in C_i} \|p-c_i\|_2^2$ is minimized.

Roughly speaking, a coreset is a summary of a point set which has the property that it approximates the cost function well for any possible candidate solution. The notion was proposed by Har-Peled and Mazumdar~\cite{HPM04} and has since recieved a wide range of attention, see~\cite{braverman2016new,FL11,FMS07,FGSSS13,FS05,LS10} for clustering. 
Coresets for geometric clustering are usually composable, meaning that if $S_1$ is a coreset for $P_1$ and $S_2$ is a coreset for $P_2$, then $S_1\cup S_2$ is a coreset for $P_1\cup P_2$~\cite{IMMM14}. Composability is arguably the main appeal of coresets; it enables an easy reduction from coreset constructions to streaming and distributed algorithms which scale to big data. 
Dealing with fair clustering, composability is not obvious. 
In this work, we initiate the study of fair coresets and their algorithmic implications:

$\bullet$ The standard coreset definition does not satisfy composability for fair clustering problems. We propose an alternative definition tailored to fair clustering problems and show that this new definition satisfies composability. The definition straightforwardly generalizes to having $\ell$ color classes and we show how a suitable coreset (of size $O(\ell k \log n \epsilon^{d-1})$ for constant $d$) may be computed efficiently.

$\bullet$ We provide different approximation algorithms for fair $k$-means clustering with two colors, including a variant of Lloyd's algorithm, two algorithms based on the approach of Chierichetti et al.~\cite{CKLV17}, and a $(1+\epsilon)$-approximation for cases with costant $k$. 

$\bullet$ We empirically evaluate the approximation algorithms and our coreset approach.  
In particular, we demonstrate empirically how coresets enable scalable fair clustering algorithms and also allow us to improve the solution quality by using better yet slower algorithms.

\paragraph{Additional related work}

The research on fairness in machine learning follows two main directions. One is to find proper definitions
of fairness. There are many different definitions available including statistical parity~\cite{TRT11},
disparate impact~\cite{FFMSV15}, disparate mistreatment~\cite{ZVG17} and many others, e.g.~\cite{BHJKR17,HPS16}. For an overview see the recent survey \cite{BHJKR17}.
Furthermore, the effects of different definitions of 
fairness and their relations have been studied in~\cite{C16,CPFGH17,KMR17}. A notion for individual fairness has been developed in~\cite{DHPRZ12}.
The other direction is the development of algorithms for fair machine learning tasks. Here the goal is to
develop new algorithms that solve learning tasks in such a way that the result satisfies a given fairness 
condition. Examples include~\cite{CKLV17,HPS16,ZVG17}. 
The closest result to our work is the above described paper by Chierichetti et. al. ~\cite{CKLV17}.

Polynomial-time approximation schemes for $k$-means were e.g. developed in~\cite{braverman2016new,FL11,KSS10}, assuming that $k$ is a constant. If $d$ is a constant,~\cite{CAKM16,FRS16} give a PTAS. If $k$ and $d$ are arbitrary, then the problem is APX-hard~\cite{AwasthiCKS15,LeeSW17}.
Lloyd's algorithm~\cite{L57} is an old but very popular local search algorithm for the $k$-means problem which can converge to arbitrarlity bad solutions. By using $k$-means++ seeding~\cite{AV07} as initialization, one can guarantee that the computed solution is a $O(\log k)$-approximation.

Chierichetti et al.~\cite{CKLV17} develop approximation algorithms for fair $k$-center and $k$-median with two colors. 
This approach was further improved by Backurs et al.~\cite{BIOSVW19}, who proposed an algorithm to speed up the fairlet computation.
R\"osner and Schmidt~\cite{RS18} extend their definition to multiple colors and develop an approximation algorithm for $k$-center. 
Bercea et al.~\cite{BGKKRSS18} develop an even more generalized notion and provide bicriteria approximations for fair variants of $k$-center, $k$-median and also $k$-means. For $k$-center, they provide a true $6$-approximation.
Very recently, Kleindessner et. al.~\cite{KAM19} proposed a linear-time $2$-approximation for fair $k$-center. This algorithm is not in the streaming setting, but still faster then previously existing approaches for fair clustering.

The fair $k$-means problem can also be viewed as a $k$-means clustering problem with size constraints. 
Ding and Xu \cite{DX15} showed how to compute an exponential sized list of candidate solutions for any of a large class of constrained clustering problems. Their result was improved by Bhattacharya et. al.~\cite{BJK18}.

In addition to the above cited coreset constructions, coresets for $k$-means in particular have also been studied empirically~\cite{AMRSLS12,AJM09,FGSSS13,FS08b,KMNPSW02}. 
Dimensionality reductions for $k$-means are for example proposed in~\cite{BZD10,BZMD15,CohenEMMP15,FeldmanSS13}. In particular\cite{CohenEMMP15,FeldmanSS13} show that any input to the $k$-means problem can be reduced to $\lceil k/\epsilon\rceil$ dimensions by using singular value decomposition (SVD) while distorting the cost function by no more than an $\epsilon$-factor. Furthermore, \cite{CohenEMMP15} also study random projection based dimensionality reductions. While SVD based reductions result in a smaller size, random projections are more versatile. We discuss the work of Cohen et. al. in more detail 

\subsection{Preliminaries}
We use $P\subseteq \REAL^d$ to denote a set of $n$ points in the $d$-dimensional vector space $\REAL^d$. 
The Euclidean distance between two points $p,q \in \REAL^d$ is denoted as $\|p-q\|$. 
The goal of clustering is to find a partition of an input point set $P$ into subsets of 'similar' points called clusters. 
In $k$-means clustering we formulate the problem as an optimization problem. 
The integer $k$ denotes the number of clusters. 
Each cluster has a center and the cost of a cluster is the sum of squared Euclidean distances to this center. 
Thus, the problem can be described as finding a set $C= \{c_1,\dots,c_k\}$ and corresponding clusters $C_1,\dots,C_k$ such that $\cost(P,C)=\sum_{i=1}^k \sum_{p\in C_i} \|p-c_i\|^2$ is minimized.

%
It is easy to see that in an optimal (non-fair) clustering each point $p$ is contained in the set $C_i$ such that $\|p-c_i\|^2$ is minimized.
The above definition can be easily extended to non-negatively and integer weighted point sets by treating the weight as a multiplicity of a point. We denote the $k$-means cost of a set $S$ weighted with $w$  and center set $C$ as $\cost_w(S,C)$. Finally, we recall that the best center for a cluster $C_i$ is its centroid $\mu(C_i):=\frac{1}{|C_i|}\sum_{p\in C_i}p$.
\begin{proposition}
\label{prop:zauberformel}
Given a point set $P\subset \mathbb{R}^d$ and a point $c\in\mathbb{R}^d$, the $1$-means cost of clustering $P$ with $c$ can be decomposed into
$\sum_{p\in P} \|p-c\|^2 = \sum_{p\in P} \|p-\mu(P)\|^2 + |P|\cdot \|\mu(P)-c\|^2$.
\end{proposition}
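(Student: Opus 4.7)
The plan is to carry out the standard ``bias-variance'' decomposition: add and subtract $\mu(P)$ inside each norm and exploit the fact that the centroid is exactly the value that makes the linear cross term vanish when summed over $P$.

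First, for each individual point $p \in P$, I would write $p - c = (p - \mu(P)) + (\mu(P) - c)$ and expand the squared Euclidean norm using the inner product:
\begin{equation*}
\|p - c\|^2 \;=\; \|p - \mu(P)\|^2 \;+\; 2\,\langle p - \mu(P),\ \mu(P) - c\rangle \;+\; \|\mu(P) - c\|^2.
\end{equation*}
This step is purely algebraic and uses only bilinearity of the inner product.

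Next, I would sum this identity over all $p \in P$. The first term directly yields $\sum_{p \in P}\|p - \mu(P)\|^2$, and the third term yields $|P|\cdot\|\mu(P) - c\|^2$ since it does not depend on $p$. The key observation is that the middle cross term collapses to zero: pulling the constant vector $\mu(P) - c$ out of the sum gives $2\,\langle \sum_{p \in P}(p - \mu(P)),\ \mu(P) - c\rangle$, and by the definition of the centroid,
\begin{equation*}
\sum_{p \in P}(p - \mu(P)) \;=\; \Bigl(\sum_{p \in P} p\Bigr) - |P|\cdot \mu(P) \;=\; |P|\cdot \mu(P) - |P|\cdot \mu(P) \;=\; 0.
\end{equation*}
Combining the three surviving contributions yields the claimed identity.

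There is no real obstacle here; the proposition is a standard parallel-axis-type identity and the only ``insight'' needed is recognizing that $\mu(P)$ is precisely the point whose defining property makes the cross term vanish. I would therefore keep the write-up to the two-line expansion plus the one-line justification that $\sum_{p \in P}(p - \mu(P)) = 0$.
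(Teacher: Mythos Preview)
Your proof is correct and is exactly the standard argument for this well-known identity. The paper itself does not supply a proof of this proposition---it is stated as a known fact---so there is nothing to compare against; your two-line expansion plus the vanishing of $\sum_{p\in P}(p-\mu(P))$ is precisely how one would justify it if a proof were required.
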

Next, we give the coreset definition for $k$-means as introduced by Har-Peled and Mazumdar.
\begin{definition}[Coreset~\cite{HPM04}]\label{def:coreset}
A set $S \subseteq \REAL^d$ together with non-negative weights $w:S \to \mathbb{N}$ is a $(k,\epsilon)$-coreset for a point set $P\subseteq \REAL^d$ with respect to the $k$-means clustering problem, if for every set $C
\subseteq \REAL^d$ of $k$ centers we have
$\cost_w(S,C) \in  (1\pm\epsilon) \cdot \cost(P,C)$.
\end{definition}

\paragraph{Fair clustering} We extend the definition of fairness from~\cite{CKLV17} to sensitive attributes with multiple possible values. As in~\cite{CKLV17}, we model the sensitive attribute by a color. 
Notice that we can model multiple sensitive attributes by assigning a different color to any combination of possible values of the sensitive attributes.
We further assume that the sensitive attributes are not among the point coordinates. Thus, our input set is a set $P \subseteq \REAL^d$ together with a coloring $c: P \rightarrow \{1,\dots,\ell\}$. 

We define $\xi(j)=|\{p\in P : c(p) = j\}| / |P|$ as the fraction that color $j$ has in the input point set. Then we call a clustering $C_1,\dots,C_k$ $(\alpha,\beta)$-fair, $0 < \alpha \le 1 \le \beta$, if for every cluster $C_i$ and every color class $j \in \{1,\dots,\ell\}$ we have  
\[
\alpha \cdot \xi(j) \le \frac{|\{p\in C_i : c(p) = j\}|}{|\{p\in C_i\}|} \le \beta \cdot  \xi(j) .
\]
For any set $C=\{c_1,\dots,c_k\}$ of $k$ centers we define 
$
\costf(P,C) 
$
to be the minimum of $\sum_{i=1}^k \sum_{p\in C_i} \|p-c_i\|^2$ where the minimum is taken over all $(\alpha,\beta)$-fair clusterings of $P$ into $C_1,\dots,C_k$.
The optimal $(\alpha,\beta)$-fair clustering $C'$ is the one with minimal $\costf(P,C')$.
Alternatively to $\xi(j)$, we could demand that the fraction of all colors is (roughly) $1/\ell$. However, notice that the best achievable fraction is $\xi(j)$. 
Thus $(\alpha,\beta)$-fairness is a strictly more general condition.
It is also arguably more meaningful if the data set itself is heavily imbalanced. Consider an instance where the blue points outnumber the red points by a factor of $100$. Then the disparity of impact is at least $0.01$. A $(1,1)$-fair clustering then is a clustering where all clusters achieve the best-possible ratio $0.01$. 

\section{Fair coresets and how to get them}\label{sec:faircoresets}
First, notice that the \emph{definition} of coresets as given in Definition~\ref{def:coreset} does not translate well to the realm of fair clustering. Assume we replace $\cost$ by $\costf$ in Definition~\ref{def:coreset}. Now consider Figure~\ref{fig:nocomposability}.
\begin{figure}
\centering
\scalebox{0.7}{
\begin{tikzpicture}[xscale=1.1,yscale=0.9,
bp/.style={draw=blue,fill=blue,minimum width=0.15cm,minimum height=0.15cm,inner sep=0cm},
rp/.style={draw=red,fill=red,diamond,minimum width=0.18cm,,minimum height=0.18cm,inner sep=0cm},
p/.style={draw=black,fill=black,circle,minimum width=0.1cm,minimum height=0.1cm,inner sep=0cm},
c/.style={draw=black,fill=white,circle,minimum width=0.1cm,minimum height=0.1cm,inner sep=0cm}]
\useasboundingbox (-0.7,-4) rectangle (17,1.6);
\begin{scope}
\node [rp] at (0,0) {};
\node [rp] at (0,0.25) {};
\node [bp] at (3.6,0) {};
\node [bp] at (3.6,0.25) {};
\node [rp] at (0,1) {};
\node [rp] at (0,1.25) {};
\node [bp] at (3.6,1) {};
\node [bp] at (3.6,1.25) {};
\draw [<->] (0.2,0.25) to node [fill=white] {$\Delta$} (3.4,0.25);
\draw [<->] (-0.2,-0.05) to (-0.2,0.3);
\node at (-0.5,0.12) {$\varepsilon$}; 
\draw [<->] (3.8,-0.05) to (3.8,1.3);
\node at (4.2,0.6) {$\varepsilon \Delta$}; 
\node at (1.8,-0.8) {point set $P_1$};
\end{scope}
\begin{scope}[yshift=-3cm]
\node (y1) [bp] at (0.2,0) {};
\node (y2) [bp] at (0.2,0.25) {};
\node (y3) [rp] at (3.4,0) {};
\node (y4) [rp] at (3.4,0.25) {};
\node (y5) [bp] at (0.2,1) {};
\node (y6) [bp] at (0.2,1.25) {};
\node (y7) [rp] at (3.4,1) {};
\node (y8) [rp] at (3.4,1.25) {};
\node at (1.8,-0.8) {point set $P_2$};
\end{scope}
\begin{scope}[xshift=6.5cm]
\node [rp,label=above:{$4$}] at (0,0.625) {};
\node [bp,label=above:{$4$}] at (3.6,0.625) {};
\node at (1.8,-0.8) {\lq coreset\rq\ $S_1$};
\end{scope}
\begin{scope}[xshift=6.5cm,yshift=-3cm]
\node [bp,label=above:{$4$}] at (0.2,0.625) {};
\node [rp,label=above:{$4$}] at (3.4,0.625) {};
\node at (1.8,-0.8) {\lq coreset\rq\ $S_2$};
\end{scope}
\begin{scope}[xshift=12.5cm]
\node (x1) [rp] at (0,0) {};
\node (x2) [rp] at (0,0.25) {};
\node (x3) [bp] at (3.6,0) {};
\node (x4) [bp] at (3.6,0.25) {};
\node (x5) [rp] at (0,1) {};
\node (x6) [rp] at (0,1.25) {};
\node (x7) [bp] at (3.6,1) {};
\node (x8) [bp] at (3.6,1.25) {};
\node (y1) [bp] at (0.2,0) {};
\node (y2) [bp] at (0.2,0.25) {};
\node (y3) [rp] at (3.4,0) {};
\node (y4) [rp] at (3.4,0.25) {};
\node (y5) [bp] at (0.2,1) {};
\node (y6) [bp] at (0.2,1.25) {};
\node (y7) [rp] at (3.4,1) {};
\node (y8) [rp] at (3.4,1.25) {};
\node (c1) [c] at (0.1,0.125) {};
\node (c2) [c] at (3.5,0.125) {};
\node (c3) [c] at (0.1,1.125) {};
\node (c4) [c] at (3.5,1.125) {};
\node at (1.8,-0.8) {$\costf(P_1\cup P_2)\in \mathcal{O}(\epsilon)$};
\draw (x1) to (c1);
\draw (x2) to (c1);
\draw (x3) to (c2);
\draw (x4) to (c2);
\draw (x5) to (c3);
\draw (x6) to (c3);
\draw (x7) to (c4);
\draw (x8) to (c4);
\draw (y1) to (c1);
\draw (y2) to (c1);
\draw (y3) to (c2);
\draw (y4) to (c2);
\draw (y5) to (c3);
\draw (y6) to (c3);
\draw (y7) to (c4);
\draw (y8) to (c4);
\end{scope}
\begin{scope}[xshift=12.5cm,yshift=-3cm]
\node (x1) [rp,label={[label distance=0.15cm]right:{$4$}}] at (0,0.625) {};
\node (x2) [bp,label={[label distance=0.15cm]left:{$4$}}] at (3.6,0.625) {};
\node (x3) [bp,label={[label distance=0.15cm]left:{$4$}}] at (0.2,0.625) {};
\node (x4) [rp,label={[label distance=0.15cm]right:{$4$}}] at (3.4,0.625) {};
\node at (1.8,-0.8) {$\costf(S_1\cup S_2)\in \Omega(\epsilon\Delta)$};
\node (c1) [c] at (0.15,0.125) {};
\node (c2) [c] at (3.45,0.125) {};
\node (c3) [c] at (0.15,1.125) {};
\node (c4) [c] at (3.45,1.125) {};
\draw (x1) to (c1);
\draw (x2) to (c2);
\draw (x3) to (c1);
\draw (x4) to (c2);
\end{scope}
\end{tikzpicture}
}
\caption{A simple example of non-composable coresets.\label{fig:nocomposability}}
\end{figure}
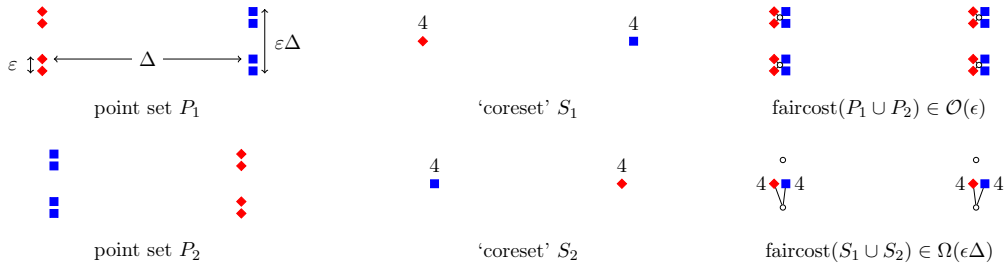

We see two point sets $P_1$ and $P_2$ with eight points each, which both have an optimum cost of $\Omega(\Delta)$. Replacing the four left and the four right points by one point induces an error of $\mathcal{O}(\epsilon \Delta)$, which is an $\mathcal{O}(\epsilon)$-fraction of the total cost.
Thus, the depicted sets $S_1$ and $S_2$ are coresets.
However, when we combine $P_1$ and $P_2$, then the optimum changes. The cost decreases dramatically to $\mathcal{O}(\epsilon)$. For the new optimal solution, $S_1\cup S_2$ still costs $\Omega(\epsilon \Delta)$, and the inequality in Definition~\ref{def:coreset} is no longer satisfied. 

We thus have to do a detour: We define a stronger, more complicated notion of coresets which regains the property of being composable. Then, we show that a special type of coreset constructions for $k$-means can be used to compute coresets that satisfy this stronger notion. It is an interesting open question to analyze whether it is possible to design sampling based coreset constructions that satisfy our notion of coresets for fair clustering.

For our detour, we need the following generalization of the standard $k$-means cost. 
A \emph{coloring constraint} for a set of $k$ cluster centers $C=\{c_1,\dots,c_k\}$ and a set of $\ell$ colors $\{1,\dots,\ell\}$ is a $k\times \ell$ matrix $K$. Given a point set $P$ with a coloring $c:P \rightarrow \{1,\dots,\ell\}$ we say that a partition of $P$ into sets $C_1,\dots, C_k$ satisfies $K$ if $|\{p\in C_i : c(p) =j\}| = K_{ij}$. The cost of the corresponding clustering
is $\displaystyle\sum_{i=1}^k \sum_{p\in C_i} \|p-c_i\|^2$ as before. 
Now we define the \emph{color-$k$-means cost} $\costc(P,K,C)$ to be the minimal cost of any clustering satisfying $K$. 
If no clustering satisfies $K$, $\costc(P,K,C):=\infty$.

Notice that we can prevent the bad example in Figure~\ref{fig:nocomposability} by using the color-$k$-means cost: If $\costc(P,K,C)$ is approximately preserved for the color constraints modeling that each cluster is either completely blue or completely red, then $S_1$ and $S_2$ are forbidden as a coresets. 

\begin{definition}
\label{def:faircoreset}
Let $P$ be a point set with coloring $c:P \rightarrow \{1,\dots,\ell\}$.
A non-negatively integer weighted set $S \subseteq \REAL^d$ with a coloring $c':S \rightarrow \{1,\dots,\ell\}$ is a $(k,\epsilon)$-coreset for $P$ for the $(\alpha,\beta)$-fair $k$-means 
clustering problem, if for every set $C \subseteq \REAL^d$ of $k$ centers and every coloring constraint $K$ we have
\[
\costc_w(S,K,C) \in (1\pm\epsilon) \cdot \costc(P,K,C),
\]
where in the computation of $\costc(S,K,C)$ we treat a point with weight $w$ as $w$ unweighted points and therefore
a point can be partially assigned to more than one cluster.
\end{definition}

Definition~\ref{def:faircoreset} demands that the cost is approximated for \emph{any} possible color constraint. This implies that it is approximated for those constraints we are interested in. Indeed, the fairness constraint can be modeled as a collection of color constraints. As an example for this, assume we have two colors and $k$ is also two; furthermore, assume that the input is perfectly balanced, i.e., the number of points of both colors is $n/2$, and that we want this to be true for both clusters as well. Say we have a center set $C=\{c_1,c_2\}$ and define $K^i$ by $K_{11}^i=i,K_{12}^i=i,K_{21}^i = \frac{n}{2}-i, K_{22}^i = \frac{n}{2}-i$, i.e., $K^i$ assigns $i$ points of each color to $c_1$ and the rest to $c_2$. The feasible assignments for the fairness constraint are exactly those assignments that are legal for exactly one of the color constraints $K^i$, $i\in \{0,\ldots,\frac{n}{2}\}$. So since a coreset according to Definition~\ref{def:faircoreset} approximates $\costc(P,C,K^i)$ for all $i$, it in particular approximately preserves the cost of any fair clustering.
This also works in the general case: We can model the $(\alpha,\beta)$-fair constraint as a collection of color constraints (and basically any other fairness notion based on the fraction of the colors in the clusters as well). 

\begin{restatable}{proposition}{colcostmodelsthings}\label{lem:colcostmodels}
Given a center set $C$, $|C|=k$, the assignment restriction to be 
$(\alpha,\beta)$-fair can be modeled as a collection of coloring constraints.
\end{restatable}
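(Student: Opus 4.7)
The plan is to exhibit, for a fixed center set $C=\{c_1,\dots,c_k\}$ and a colored point set $P$, an explicit finite family $\mathcal{K}$ of coloring constraints whose union of feasible partitions is exactly the set of $(\alpha,\beta)$-fair partitions. Write $n_j=|\{p\in P:c(p)=j\}|$, so that $\xi(j)=n_j/|P|$. I would let $\mathcal{K}$ consist of all $k\times\ell$ matrices $K$ with non-negative integer entries such that (i) $\sum_{i=1}^k K_{ij}=n_j$ for each color $j$, so every point gets assigned, and (ii) for every $i\in\{1,\dots,k\}$ and every $j\in\{1,\dots,\ell\}$, writing $r_i(K):=\sum_{j'=1}^\ell K_{ij'}$,
\[
\alpha\,\xi(j)\cdot r_i(K)\;\le\;K_{ij}\;\le\;\beta\,\xi(j)\cdot r_i(K).
\]

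Next I would verify the two inclusions. For the forward direction, suppose $C_1,\dots,C_k$ is an $(\alpha,\beta)$-fair partition of $P$. Define $K$ by $K_{ij}=|\{p\in C_i:c(p)=j\}|$; then $r_i(K)=|C_i|$, condition (i) holds because the $C_i$ partition $P$, and condition (ii) is exactly the definition of $(\alpha,\beta)$-fairness multiplied through by $|C_i|$. Hence $K\in\mathcal{K}$ and the partition satisfies $K$. For the backward direction, if a partition $C_1,\dots,C_k$ satisfies some $K\in\mathcal{K}$, then by definition $|\{p\in C_i:c(p)=j\}|=K_{ij}$ and $|C_i|=r_i(K)$, so dividing condition (ii) by $r_i(K)$ (when positive) recovers the $(\alpha,\beta)$-fair inequalities.

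The only points to handle with care are the degenerate cases: if $r_i(K)=0$ then condition (ii) forces $K_{ij}=0$ for every $j$, i.e.\ cluster $i$ is empty, and the fairness definition is vacuous for empty clusters, so the equivalence is not affected. Finally, $\mathcal{K}$ is a well-defined collection because each entry $K_{ij}$ is a non-negative integer bounded by $|P|$, so $|\mathcal{K}|\le (|P|+1)^{k\ell}$ is finite. I do not expect a real obstacle here; the proposition is essentially the observation that the fairness constraint depends on the partition only through the per-cluster per-color counts, and enumerating all admissible count matrices suffices.
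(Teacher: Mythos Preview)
Your proposal is correct and follows essentially the same approach as the paper: both arguments rest on the observation that $(\alpha,\beta)$-fairness depends on a partition only through its induced count matrix $K$, and that the fairness inequalities can be rewritten as $\alpha\,\xi(j)\le K_{ij}/\sum_{h}K_{ih}\le\beta\,\xi(j)$. Your version is in fact more thorough---you explicitly define the collection $\mathcal{K}$, verify both inclusions, address the empty-cluster case, and note finiteness---whereas the paper's proof is a terse two-line restatement of the fairness condition in terms of $K$.
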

\begin{proof}
Recall $\xi(j)=\frac{|\{p\in P~:~c(p)=j\}|}{|P|}$.
Let $C=\{C_1,\ldots ,C_k\}$ be a clustering and let $K$ be the coloring constraint matrix induced by $C$. We observe that the $i$th row sums up to $|C_i|$ and the $j$th column sums up to $|\{p\in P~:~c(p)=j\}|$.
Then $C$ is $(\alpha,\beta)$-fair if and only if $\alpha\cdot \xi(j)\leq \frac{|\{p\in C_i~:~c(p)=j\}|}{|C_i|}=\frac{K_{i,j}}{\sum_{h=1}^k K_{i,h}} \leq \beta\cdot \xi(j)$ for all $i\in \{1,\ldots,k\}$ and $j\in\{1,\ldots,\ell\}$.
\end{proof}

The main advantage of Definition~\ref{def:faircoreset} is that it satisfies composability. 
The main idea is that for any coloring constraint $K$, any clustering satisfying $K$ induces specific color constraints $K_1$ and $K_2$ for $P_1$ and $P_2$; and for these, the coresets $S_1$ and $S_2$ also have to satisfy the coreset property. We can thus proof the coreset property for $S$ and $K$ by composing the guarantees for $S_1$ and $S_2$ on $K_1$ and $K_2$.

\begin{restatable}[Composability]{lemma}{coresetcomposable}
Let $P_1, P_2 \subset \mathbb{R}^d$ be point sets.
Let $S_1$, $w_1$, $c_1$ be a $(k,\epsilon)$-coreset for $P_1$ and let $S_2$, $w_2$, $c_2$ be a $(k,\epsilon)$-coreset for $P_2$ (both satisfying Definition~\ref{def:faircoreset}). Let $S=S_1\cup S_2$ and concatenate $w_1, w_2$ and $c_1, c_2$ accordingly to obtain weights $w$ and colors $c$ for $S$. 
Then $S$, $w$, $c$ is a coreset for $P=P_1 \cup P_2$ satisfying Definition~\ref{def:faircoreset}.
\end{restatable}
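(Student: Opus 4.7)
The plan is to prove composability by showing that both $\costc(P,K,C)$ and $\costc_w(S,K,C)$ admit a clean additive decomposition across the partition $P = P_1\cup P_2$ (respectively $S = S_1\cup S_2$), and then to apply the coreset guarantees of $S_1$ and $S_2$ to each piece separately.

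First I would establish the decomposition lemma: for any center set $C$ of size $k$ and any coloring constraint $K$,
\[
\costc(P,K,C) \;=\; \min_{K_1+K_2 = K}\bigl[\costc(P_1,K_1,C) + \costc(P_2,K_2,C)\bigr],
\]
where the minimum ranges over non-negative integer matrices $K_1,K_2$ with $K_1+K_2 = K$ (and analogously for $S$ with weighted points, using the convention that a weighted point may be split across clusters). The $\geq$ direction is immediate: any partition of $P$ realizing $\costc(P,K,C)$ restricts to partitions of $P_1$ and $P_2$ whose induced constraints $K_1,K_2$ sum to $K$ and whose costs sum to the total. The $\leq$ direction follows by gluing optimal partitions of $P_1,P_2$ witnessing $\costc(P_i,K_i,C)$ into a partition of $P$ satisfying $K$. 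The analogous identity holds for $S$ because $S_1$ and $S_2$ are disjoint and weighted costs are additive over the union.

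Next I would chain the two inequalities. For the upper bound, pick the split $(K_1^\ast, K_2^\ast)$ that attains the minimum on the $P$-side. The coreset property of each $S_i$ gives $\costc_{w_i}(S_i, K_i^\ast, C) \leq (1+\epsilon)\,\costc(P_i, K_i^\ast, C)$, and since $(K_1^\ast,K_2^\ast)$ is a feasible split of $K$ on the $S$-side, summing yields $\costc_w(S,K,C) \leq (1+\epsilon)\,\costc(P,K,C)$. For the lower bound, pick the split $(K_1^{\ast\ast}, K_2^{\ast\ast})$ attaining the minimum on the $S$-side and apply the coreset property in the other direction, obtaining $\costc_w(S,K,C) \geq (1-\epsilon)\,[\costc(P_1,K_1^{\ast\ast},C)+\costc(P_2,K_2^{\ast\ast},C)] \geq (1-\epsilon)\,\costc(P,K,C)$, where the last step uses that $(K_1^{\ast\ast}, K_2^{\ast\ast})$ is some feasible split of $K$ on the $P$-side and hence its sum is at least the minimum.

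The main obstacle is handling the case when $\costc(P,K,C)$ or $\costc_w(S,K,C)$ equals $\infty$, i.e.\ when no partition realizes $K$ on one side. I would argue that if, say, $\costc(P,K,C) = \infty$ then every split $(K_1,K_2)$ of $K$ must be infeasible on at least one of $P_1, P_2$, in which case (by the coreset property applied to the infinite value, which forces the coreset to also be infinite on that piece) $\costc_w(S,K,C) = \infty$ as well, preserving the bound; the symmetric implication handles the $S$-side. A second subtlety is that the coreset definition permits fractional assignment of a weighted point, so strictly speaking the split on the $S$-side should allow non-integer matrices; this does not affect the argument because the decomposition identity, the coreset hypothesis, and the chaining inequalities all remain valid when we extend the $\min$ to rational matrices summing to $K$, with the $P$-side (which only ever needs integer splits) providing the tighter bound used above.
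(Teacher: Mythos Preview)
Your proposal is correct and follows essentially the same approach as the paper: both arguments split the constraint $K$ into $K_1+K_2$ according to an optimal assignment on one side, apply the individual coreset guarantees to each piece, and then sum; the paper does this directly via the optimal assignment maps $\gamma$ and $\gamma'$, whereas you package the same step as an explicit $\min$-decomposition identity, but the logic is identical. Your extra remarks on the $\infty$ case and on fractional splits are sound (the latter is in fact moot here since weights are integers and a weighted point is treated as that many unweighted copies, so all splits remain integral), and neither appears explicitly in the paper's proof.
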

\begin{proof}
Let $C = \{c_1,\ldots,c_k\} \subset \mathbb{R}^d$ be an arbitrary set of centers, and let $K\in\mathbb{N}^{k \times \ell}$ be a an arbitrary coloring constraint for $C$. We want to show that
\[
\costc_w(S,K,C) \in (1\pm \epsilon)\costc(P,K,C).
\]
Let $\gamma : P \to C$ be an assignment that minimizes the assignment cost among all assignments that satisfy $K$, implying that $\costc(P,K,C) = \sum_{p \in P} ||x-\gamma(x)||^2$. Since $\gamma$ satisfies $K$, the number of points of color $j$ assigned to each center $c_i \in C$ is exactly $K_{ij}$. We split $K$ into two matrices $K_1$ and $K_2$ with $K=K_1+K_2$ by counting the number of points of each color at each center which belong to $P_1$ and $P_2$, respectively. In the same fashion, we define two mappings $\gamma_1 : P_1 \to C$ and $\gamma_2: P_2 \to C$ with $\gamma_1(p)=\gamma(p)$ for all $p \in P_1$ and $\gamma_2(p)=\gamma(p)$ for all $p \in P_2$. 

Now we argue that $\costc(P,C,K)=\costc(P_1,C,K_1) + \costc(P_2,C,K_2)$. Firstly, we observe that $\costc(P,C,K)\le\costc(P_1,C,K_1) + \costc(P_2,C,K_2)$ since $\gamma_1$ and $\gamma_2$ are legal assignments for the color constraint $K_1$ and $K_2$, respectively, and they induce exactly the same point-wise cost as $\gamma$. Secondly, we argue that there cannot be cheaper assignments for $K_1$ and $K_2$. Assume there where an assignment $\gamma_1'$ with $\sum_{p \in P_1} ||x-\gamma_1'(x)||^2 < \costc(P_1,C,K_1)$. Then we could immediately adjust $\gamma$ to be identical to $\gamma_1'$ on the points in $P_1$ instead of $\gamma_1$, and this would reduce the cost; a contradiction to the optimality of $\gamma$. The same argument holds for $\gamma_2$. Thus, $\costc(P,C,K)=\costc(P_1,C,K_1) + \costc(P_2,C,K_2)$ is indeed true.

Now since $S_1$, $w_1$, $c_1$ is a coreset for $P_1$ and $S_2$, $w_2$, $c_2$ is a coreset for $P_2$, they have to approximate $\costc(P_1,C,K_1)$ and $\costc(P_2,C,K_2)$ well. We get from this that
\begin{align*}
 & \costc_w(S_1,C,K_1) + \costc_w(S_2,C,K_2) \\
&\in (1\pm \epsilon)\cdot \costc(P_1,C,K_1) + (1\pm \epsilon)\cdot \costc(P_2,C,K_2) \\
&\in (1\pm \epsilon) \cdot \costc(P,C,K).
\end{align*}
Observe that $\costc_w(S,C,K) \le \costc_w(S_1,C,K_1) + \costc_w(S_2,C,K_2)$ since we can concatenate the optimal assignments for $S_1$ and $S_2$ to get an assignment for $S$. Thus, $\costc_w(S,C,K) \le (1+\epsilon)\cdot\costc(P,C,K)$.
It remains to show that
$\costc_w(S,C,K) \ge (1-\epsilon)\cdot\costc(P,C,K)$. 

Let $\gamma':S\to C$ be an assignment that satisfies $K$ and has cost $\costc_w(S,C,K)$ (for simplicity, we treat $S$ as if it were expanded by adding multiple copies of each weighted point; recall that we allow weights to be split up for $S$). Let $\gamma_1':P_1 \to C$ and $\gamma_2':P_2\to C$ be the result of translating $\gamma'$ to $P_1$ and $P_2$, and split $K$ into $K_1'$ and $K_2'$ according to $\gamma'$ as we did above. Then 
$\costc_w(S,C,K)=\costc_w(S_1,C,K_1')+\costc_w(S_2,C,K_2')$ by the same argumentation as above. Furthermore, 
\begin{align*}
\costc_w(S,C,K) =& \costc_w(S_1,C,K_1')+\costc_w(S_2,C,K_2') \\
\ge& (1- \epsilon) \costc_w(P_1,C,K_1')+(1- \epsilon)\costc_w(P_2,C,K_2')\\
\ge& (1- \epsilon) \costc(P,C,K).
\end{align*}
where the first inequality holds by the coreset property and the second is true since we can also use $\gamma'$ to cluster $P$, implying that $ \costc_w(P,C,K) \le  \costc_w(P_1,C,K_1')+\costc_w(P_2,C,K_2')$. That completes the proof.
\end{proof}

We have thus achieved our goal of finding a suitable \emph{definition} of coresets for fair clustering. Now the question is whether we can actually compute sets which satisfy the rather strong 
Definition~\ref{def:faircoreset}. Luckily, we can show that a special class of coreset constructions for $k$-means can be adjusted to work for our purpose. A \emph{coreset construction for $k$-means} is an algorithm that takes a point set $P$ as input and computes a summary $S$ with integer weights that satisfies Definition~\ref{def:coreset}. 

We say that a coreset construction is \emph{movement-based} if
\begin{itemize}
\item all weights $w(p), p \in S$ are integers
\item there exists a mapping $\pi : P \rightarrow S$ with $\sigma^{-1}(p) = w(p)$ for all $p \in S$ which satisfies that 
$\sum\limits_{x \in P} || x - \pi(x) ||^2 \leq \dfrac{\varepsilon^2}{16} \cdot \text{OPT}_k$, where $\text{OPT}_k=\min\limits_{C\subset \mathbb{R}^d,|C|=k} \cost(P,C)$. 
\end{itemize} 
Movement-based coreset constructions compute a coreset by \lq moving\rq\ points to common places at little cost, and then replacing heaps of points by weighted points. Examples for movement-based coreset constructions are~\cite{FGSSS13,FS05,HPM04}.
Now the crucial observation is that we can turn any movement-based coreset construction for $k$-means into an algorithm that computes coresets for fair $k$-means satisfying Definition~\ref{def:faircoreset}. 
The main idea is to run $ALG$ to move all points in $P$ to common locations, and then to replace all points \emph{of the same color} at the same location by one coreset point. This may result in up to $\ell$ points for every location, i.e., the final coreset result may be larger than its colorless counterpart by a factor of at most $\ell$. The rest of the proof then shows that Definition~\ref{def:faircoreset} is indeed true, following the lines of movement-based coreset construction proofs. 

\begin{restatable}{theorem}{thmmovbased}\label{thm:movbased}
Let $ALG$ be a movement-based coreset construction for the $k$-means problem.
Assume that given the output $P \in \mathcal{R}^d$, $k$ and $\epsilon$, 
the size of the coreset that $ALG$ computes is bounded by $f(|P|,d,k,\epsilon)$. 
Then we can construct an algorithm $ALG'$ which constructs a set $S'$ that satisfies Definition~\ref{def:faircoreset}. The size of this set is bounded by $\ell \cdot  f(|P|,d,k,\epsilon)$, where $\ell$ is the number of colors. 
\end{restatable}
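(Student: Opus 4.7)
The plan is to describe $ALG'$, then prove the coreset inequality using a color-preserving bijection between $P$ and a unit-weight expansion of $S'$. To construct $ALG'$, I would first run $ALG$ on $P$ with the colors stripped, obtaining a set $S$ with integer weights $w$ and the guaranteed movement map $\pi : P \to S$ with $|\pi^{-1}(s)| = w(s)$ and $\sum_{p \in P}\|p-\pi(p)\|^2 \le \frac{\epsilon^2}{16}\cdot\mathrm{OPT}_k$. Then, for every $s\in S$ and every color $j\in\{1,\dots,\ell\}$ that appears in $\pi^{-1}(s)$, I place a single point at the location of $s$, give it color $j$ and weight $w'(s,j) := |\{p\in\pi^{-1}(s) : c(p)=j\}|$. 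The resulting set $S'$ has at most $\ell \cdot |S| \le \ell\cdot f(|P|,d,k,\epsilon)$ points, which gives the claimed size bound.

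For the coreset property, fix an arbitrary center set $C=\{c_1,\dots,c_k\}$ and coloring constraint $K$. The key observation is that expanding each weighted point of $S'$ into $w'(s,j)$ unit-weight copies produces a multiset $\widetilde{S}$ that admits a color-preserving bijection $\phi : P\to \widetilde{S}$ sending $p$ to some unit-weight copy located at $\pi(p)$ with color $c(p)$. Since $\phi$ preserves colors, a $K$-feasible assignment $\gamma$ of $P$ to $C$ translates, via $\phi$, into a $K$-feasible assignment $\gamma'$ of $\widetilde{S}$ to $C$, and vice versa; moreover because fractional splitting of the weighted points of $S'$ is exactly equivalent to arbitrary assignments of $\widetilde{S}$, the infima in $\costc_{w'}(S',K,C)$ and the cost over $\widetilde{S}$ coincide. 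Consequently the question reduces to comparing, for paired assignments $(\gamma,\gamma')$, the costs $\sum_p \|p-\gamma(p)\|^2$ and $\sum_p \|\pi(p)-\gamma(p)\|^2$.

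This comparison is carried out by the standard weak triangle inequality $\|a-c\|^2 \le (1+\delta)\|b-c\|^2 + (1+1/\delta)\|a-b\|^2$. Applied with $\delta = O(\epsilon)$, summing over $p\in P$, and using the movement bound together with $\mathrm{OPT}_k \le \costc(P,K,C)$ (which holds because $\costc(P,K,C)\ge \cost(P,C)\ge \mathrm{OPT}_k$), I obtain
\[
\textstyle\costc_{w'}(S',K,C) \le \sum_{p\in P}\|\pi(p)-\gamma^*(p)\|^2 \le (1+\delta)\costc(P,K,C) + (1+\tfrac{1}{\delta})\tfrac{\epsilon^2}{16}\costc(P,K,C),
\]
where $\gamma^*$ is the optimal $K$-feasible assignment of $P$. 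The lower bound is symmetric: starting from an optimal (expanded) $K$-feasible assignment of $\widetilde{S}$, pull it back through $\phi$ to a $K$-feasible assignment of $P$, and apply the inequality in the opposite direction. Choosing $\delta$ appropriately, both bounds collapse to $(1\pm\epsilon)\costc(P,K,C)$. The main obstacle I anticipate is the bookkeeping around fractional assignments: one must verify carefully that the correspondence between $K$-feasible assignments of $P$ and of weighted-but-splittable $S'$ is exact, so that no feasibility is lost when translating in either direction. Once this bijection is pinned down, the rest is a direct imitation of the movement-based coreset proof from~\cite{FGSSS13,HPM04}.
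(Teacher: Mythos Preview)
Your proposal is correct and follows essentially the same approach as the paper: construct $S'$ by splitting each movement-coreset point by color, expand to unit-weight copies, use the resulting color-preserving bijection to transfer $K$-feasible assignments in both directions, and bound the cost change via the movement budget together with $\mathrm{OPT}_k\le\costc(P,K,C)$. The only cosmetic difference is that the paper applies the triangle inequality to the $\ell^2$-norm of the vector of per-point distances (Minkowski form) and then squares, whereas you use the pointwise parameterized inequality $\|a-c\|^2\le(1+\delta)\|b-c\|^2+(1+1/\delta)\|a-b\|^2$ and sum; both routes give the $(1\pm\epsilon)$ bound.
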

\begin{proof}
For any $P$, $ALG$ gives us a set $S$ and a non-negative weight function $w$ such that Definition~\ref{def:coreset} is true, i.e., 
\begin{equation}
\cost_w(S,C) \in (1\pm\epsilon)\cost(P,C)\label{eq:coreset:inproof:movement}
\end{equation}
holds for all set of centers $C$ with $|C|=k$. Since $ALG$ is movement-based, the weights are integer; and there exists a mapping $\pi : P \to S$, such that at most $w(p)$ points from $P$ are mapped to any point $p \in S$, and such that 
\begin{equation}
\sum\limits_{x \in P} || x - \pi(x) ||^2 \leq \dfrac{\varepsilon^2}{16} \cdot \text{OPT}_k
\label{eq:strongercoresetp:inproof:movement}
\end{equation}
is true. Statement~\eqref{eq:strongercoresetp:inproof:movement} is stronger than~\eqref{eq:coreset:inproof:movement}, and we will only need~\eqref{eq:strongercoresetp:inproof:movement} for our proof. We will, however, need the mapping $\pi$ to construct $ALG'$.
Usually, the mapping will be at least implicitly computed by $ALG$. If not or if outputting this information from $ALG$ is cumbersome, we do the following. We assign every point in $P$ to its closest point in $S$. The resulting mapping has to satisfy~\ref{eq:strongercoresetp:inproof:movement}, since the distance of any point to its closest point in $S$ can only be smaller than in any given assignment. We may now assign more than $w(p)$ points to $S$. We resolve this by simply changing the weights of the points in $S$ to match our mapping. Since we now have $S$, $w$ and $\pi$ satisfying~\eqref{eq:strongercoresetp:inproof:movement}, we can proceed as if $ALG$ had given a mapping to us.

Now we do what movement-based coreset constructions do internally as well: We consolidate all points that share the same location. However, since they may not all be of the same color, we possibly put multiple (at most $\ell$) copies of any point in $S$ into our coreset $S'$. More precisely, for every $p \in S$, we count the number $n_{p,i}$ of points of color $i$. If $n_{p,i}$ is at least one, then we put $p$ into $S'$ with color $i$ and weight $n_{p,i}$. The size of $S'$ is thus at most $\ell \cdot f(|P|,d,k,\epsilon)$.

The proof that $S'$ satisfies~\ref{def:faircoreset} is now close to the proof that movement-based coreset constructions work. To execute it, we imagine $S'$ in its expanded form (where every point $p$ is replaced by $n_{p,i}$ points of color $i$. We call this expanded version $P'$. Notice that $\cost(P',C)=\cost_w(S',C)$ for all $C \subset \mathbb{R}^d$. We only need $P'$ for the analysis. Notice that $\pi$ can now be interpreted as a bijective mapping between $P$ and $P'$ and this is how we will use it.

Let $C$ be an arbitrary center set with $|C|=k$ and let $K$ be an arbitrary coloring constraint.
We want to show that $\costc(P',K,C) \in (1\pm\epsilon) \cdot \costc(P,K,C)$. 
If no assignment satisfies $K$, then $\costc(P,K,C)$ is infinity, and there is nothing to show.
Otherwise, fix an arbitrary optimal assignment $\gamma:P\to C$ of the points in $P$ to $C$ among all assignments that satisfy $K$. Notice that $\gamma$ and $\pi$ are different assignments with different purposes; $\gamma$ assigning a point in $P$ to its center, and $\pi$ assigning each point in $P$ to its moved version in $P'$.

We let $v_c(x) := ||x-\gamma(x)||$ be the distance between $x\in P$ and the center its assigned to. Let $v_c$ be the $|P|$-dimensional vector consisting of all $v_c(x)$ (in arbitrary order). Then we have \[\costc(P,C,K) = \sum_{x \in P} ||x-\gamma(x)||^2 = \sum_{x \in P} v_c(x)^2 = ||v_c||^2.\]

Furthermore, we set $v_p(x) = ||\pi(x) - x||$ and let $v_p$ be the $|P|$-dimensional vector of all $v_p(x)$ (ordered in the same way as $v_c$). We have $\sum_{x \in P} ||\pi(x)-x||^2 \le \dfrac{\varepsilon^2}{16} \cdot OPT_k$ by our preconditions.

Now we want to find an upper bound on $\costc(P',C,K)$. Since we only need an upper bound, we can use $\gamma$ for assigning the points in $P'$ to $C$. We already know that $\gamma$ satisfies $K$ for the points in $P$; and the points in $P'$ are only moved versions of the points in $P$. 
We use this and then apply the triangle inequality:
\begin{align*}
\costc(P',C,K) \le& \sum_{y \in P'} ||y-\gamma(\pi^{-1}(y))||^2= \sum_{x \in P} ||\gamma(x)-\pi(x)||^2\\
\le& \sum_{x \in P} (||\gamma(x)-x||+||x-\pi(x)||)^2
= \sum_{x \in P} (v_c(x)+v_p(x))^2
= ||v_c+c_p||^2.
\end{align*}
Now we can apply the triangle inequality to the vector $v_c+v_p$ to get  $||v_c+v_p|| \le ||v_c||+||v_p|| \le \sqrt{\costc(P,C,K)} + \sqrt{\frac{\varepsilon^2}{16} }\cdot OPT_k$.
So we know that
\begin{align*}
\costc(P',C,K)\le||v_c+v_p||^2 & \le \costc(P,C,K) + \frac{\varepsilon^2}{16} \cdot OPT_k 
\\& \quad + 2 \sqrt{\costc(P,C,K)}\cdot \sqrt{\frac{\varepsilon^2}{16}\cdot OPT_k}\\
& \le \costc(P,C,K) + \frac{\varepsilon^2}{16} \cdot OPT_k \\
& \quad + \frac{\varepsilon}{2} \cdot \costc(P,C,K)\\
& < (1+\epsilon) \cdot \costc(P,C,K).
\end{align*}

To obtain that also $\costc(P,C,K) \le (1+\epsilon)\cdot \costc(P',C,K)$, we observe that the above argumentation is symmetric in $P$ and $P'$. No argument used that $P$ is the original point set and $P'$ is the moved version. So we exchange the roles of $P$ and $P'$ to complete the proof.
\end{proof}

We can now apply Theorem~\ref{thm:movbased}. Movement-based constructions include the original paper due to Har-Peled and Mazumdar~\cite{HPM04} as well as the practically more efficient algorithm BICO~\cite{FGSSS13}. For more information on the idea of movement-based coreset constructions, see Section 3.1 in the survey~\cite{MS18}. For BICO in particular, Lemma 5.4.3 in~\cite{S14} gives a proof that the construction is movement-based. Using Theorem~\ref{thm:movbased} and Corollary~1 from~\cite{FGSSS13}, we then obtain the following.

\begin{corollary}
\label{cor:fairbico}
There is an algorithm in the insertion-only streaming model which computes a $(k,\epsilon)$-coreset for the fair $k$-means problem according to Definition~\ref{def:faircoreset}. The size of the coreset and the storage requirement of the algorithm is $m\in O(\ell \cdot k \cdot \log n \cdot \epsilon^{-d+2})$, where $\ell$ is the number of colors in the input, and where $d$ is assumed to be a constant.

The running time of the algorithm is $O(N(m)(n+\log(n\Delta)m))$, where $\Delta$ is the spread of the input points and $N(m)$ is the time to compute an (approximate) nearest neighbor. 
\end{corollary}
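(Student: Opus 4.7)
The plan is to apply Theorem~\ref{thm:movbased} to a movement-based streaming coreset construction for $k$-means, specifically BICO~\cite{FGSSS13}. Three things need to be checked: that BICO is movement-based, that the color-splitting postprocessing of Theorem~\ref{thm:movbased} can be executed inside the stream, and that the claimed running-time and storage bounds survive multiplying the summary size by $\ell$.

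For the first point I would cite Lemma~5.4.3 of~\cite{S14}, which shows that the natural BICO mapping $\pi:P\to S$ sending each input point to the nucleus of its BICO node satisfies $\sum_{x\in P}\|x-\pi(x)\|^2\le(\epsilon^2/16)\cdot\mathrm{OPT}_k$. Theorem~\ref{thm:movbased} then supplies an algorithm $ALG'$ whose output $S'$ satisfies Definition~\ref{def:faircoreset} and has size at most $\ell$ times the size of BICO's coreset. Corollary~1 of~\cite{FGSSS13} bounds the latter by $O(k\log n\cdot\epsilon^{-d+2})$ for constant $d$, yielding the claimed $O(\ell k\log n\cdot\epsilon^{-d+2})$ bound on both the coreset size and the working memory of the streaming algorithm.

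The streaming implementation is essentially an online realization of the color-splitting step used in the proof of Theorem~\ref{thm:movbased}. I would augment each BICO summary node with a length-$\ell$ vector of color-indexed integer weights, and whenever BICO routes a stream point $x$ into the node with nucleus $p$, the $c(x)$-th coordinate of this vector is incremented. The node-opening, merging, and rebuilding operations of BICO are left unchanged except that their scalar weights become color vectors; this increases the per-operation work by at most an $O(\ell)$ factor, which is already absorbed into $m$. After the stream, each node emits up to $\ell$ weighted coreset points, one per color with positive weight, matching exactly the postprocessing in the proof of Theorem~\ref{thm:movbased}; hence the running-time estimate $O(N(m)(n+\log(n\Delta)m))$ from~\cite{FGSSS13} carries over.

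The main obstacle I anticipate is ensuring that the mapping $\pi$ used in the analysis of Theorem~\ref{thm:movbased} is exactly the one implicitly produced by BICO during its single streaming pass, because the fallback described in that proof (assigning each point to its closest summary point and then adjusting weights) would need a second scan of the input and is therefore unavailable in one-pass streaming. This turns out to be harmless: BICO's online assignment already satisfies the required movement bound by Lemma~5.4.3 of~\cite{S14}, so no rescan is necessary and the correctness argument of Theorem~\ref{thm:movbased} applies verbatim to the online mapping, completing the proof.
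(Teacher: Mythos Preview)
Your proposal is correct and matches the paper's approach exactly: the paper's proof of this corollary consists of a single sentence invoking Theorem~\ref{thm:movbased} together with Lemma~5.4.3 of~\cite{S14} (to certify that BICO is movement-based) and Corollary~1 of~\cite{FGSSS13} (for the size and running-time bounds). Your additional discussion of how to realize the color-splitting online and why no second pass is needed is more detail than the paper itself provides, but it is consistent with and fleshes out the intended argument.
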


\section{Approximation algorithms for fair \texorpdfstring{$k$}{k}-means}

We give a full overview of algorithms for fair $k$-means clustering in Section~\ref{sec:appendix:algorithms}, but give an overiew here. 
Notice that while the previous section was for multiple colors, we now go back to the case with only two colors, assuming that exactly half of the input points are colored with each color, and demanding that this is true for all clusters in the clustering as well. We call this special case \emph{exactly balanced}.
We do this because for multiple colors, no true approximation algorithms are known, and there is indication that this problem might be very difficult (it is related to solving capacitated $k$-median/$k$-means, a notoriously difficult question). Notice that the coreset approach works for arbitrary $(\alpha,\beta)$-fair $k$-means.

For two colors, Chierichetti et. al. \cite{CKLV17} outline how to transfer approximation algorithms for clustering to the setting of fair clustering, but derive the algorithms only for $k$-center and $k$-median. The idea is to first compute a coarse clustering where the microclusters are called \emph{fairlets}, and then to cluster representatives of the fairlets to obtain the final clustering. The following algorithm extends their ideas to compute fairlets for $k$-means. 

\begin{algorithm}
\caption{Fairlet computation}
\label{alg:fairlets:mainpart}
\let\oldnl\nl
\newcommand{\nonl}{\renewcommand{\nl}{\let\nl\oldnl}}
1: Let $B$ be the blue points and $R$ be the red points\\
2: For any $b \in B$, $r \in R$, set $c(r,b) = ||r-b||^2/2$\\
3: Consider the complete bipartite graph $G$ on $B$ and $R$\\
4: Compute a min cost perfect matching $M$ on $G$\\
5: For each edge $(r,b)\in M$, add $\mu(\{r,b\})$ to $F$\\
6: Output $F$
\end{algorithm}

The idea of the algorithm is the following. In any optimal solution, the points can be paired into tuples of a blue and a red point which belong to the same optimal cluster. Clustering the $n/2 \ge k$ tuples with $n/2$ centers cannot be more expensive than the cost of the actual optimal $k$-means solution. Thus, we would ideally like to know the tuples and partition them into clusters. Since we cannot know the tuples, we instead compute a minimum cost perfect matching between the red and blue points, where the weight of an edge is the $1$-means cost of clustering the two points with an optimal center (this is always half their squared distance). The matching gives us tuples; these tuples are the fairlets. The centroid of each fairlet now serves as its representative.
The following theorem shows that clustering the representatives yields a good solution for the original problem. 

\begin{restatable}{theorem}{thmfairletskmeans}\label{thm:kmeansfairlets}
There is an algorithm that achieves the following. For any $P \subset \mathbb{R}^d$ which contains $|P|/2$ blue and $|P|/2$ red points, it computes a set of representatives $F \subset P$ of size $|P|/2$, such that an $\alpha$-approximate solution for the colorless $k$-means problem on $F$ yields a $(5.5\alpha+1)$-approximation for the fair $k$-means problem on $P$. 
\end{restatable}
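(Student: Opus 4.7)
The plan is to use the min-cost matching $M$ produced by Algorithm~\ref{alg:fairlets:mainpart} both as a lower bound on the fair optimum and as the source of representatives that behave well under the centroid decomposition of Proposition~\ref{prop:zauberformel}. Write $\mathrm{OPT}$ for the optimal fair $k$-means cost on $P$, realized by clusters $C_1^\ast,\ldots,C_k^\ast$ with centers $c_1^\ast,\ldots,c_k^\ast$, and let $i(p)$ denote the index of the optimal cluster containing a point $p \in P$.

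First I would show $\cost(M) \le \mathrm{OPT}$. Because each $C_i^\ast$ is fair, it contains equally many red and blue points, so pairing them arbitrarily within each optimal cluster yields a perfect bipartite matching $M^\ast$. For any pair $(r,b)$ with $r,b \in C_i^\ast$, the squared triangle inequality gives $\tfrac{1}{2}\|r-b\|^2 \le \|r - c_i^\ast\|^2 + \|b - c_i^\ast\|^2$, and summing across all pairs yields $\cost(M^\ast) \le \mathrm{OPT}$; minimality of $M$ then forces $\cost(M) \le \mathrm{OPT}$. Next, given an $\alpha$-approximate center set $C$ for colorless $k$-means on $F$, I lift its assignment back to $P$ by sending both endpoints of each matched pair to the center chosen for their midpoint $f_j$. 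The resulting clustering of $P$ is automatically $(1,1)$-fair since each pair contributes one red and one blue point to the same cluster. Applying Proposition~\ref{prop:zauberformel} to the two-point set $\{r_j,b_j\}$ with any $c \in \mathbb{R}^d$ gives $\|r_j - c\|^2 + \|b_j - c\|^2 = \tfrac{1}{2}\|r_j - b_j\|^2 + 2\|f_j - c\|^2$. Summing over all pairs,
\[
\costf(P,C) \;\le\; \cost(M) + 2\,\cost(F,C) \;\le\; \cost(M) + 2\alpha\,\mathrm{OPT}_k(F),
\]
where $\mathrm{OPT}_k(F)$ denotes the optimal colorless $k$-means cost on $F$.

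The final step bounds $\mathrm{OPT}_k(F)$ in terms of $\mathrm{OPT}$ by using $\{c_1^\ast,\ldots,c_k^\ast\}$ as a candidate solution for $F$: each $f_j$ is assigned to the nearer of $c_{i(r_j)}^\ast$ and $c_{i(b_j)}^\ast$. Bounding the minimum by the average and then applying the squared triangle inequality $\|f_j - c_{i(r_j)}^\ast\|^2 \le 2\|f_j - r_j\|^2 + 2\|r_j - c_{i(r_j)}^\ast\|^2$ (symmetrically for $b_j$), together with $\|f_j-r_j\|^2 = \|f_j-b_j\|^2 = \tfrac{1}{4}\|r_j - b_j\|^2$ and $\sum_j \|r_j - b_j\|^2 = 2\,\cost(M)$, a routine calculation yields a bound of the form $\mathrm{OPT}_k(F) \le c\,\mathrm{OPT}$ for a small constant $c$. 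Combining this with the inequality above and the lower bound $\cost(M) \le \mathrm{OPT}$ from Step~1 produces the claimed $\costf(P,C) \le (1 + 5.5\alpha)\,\mathrm{OPT}$.

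The main obstacle is this last step: the midpoints in $F$ are defined only by the matching and have no a priori relationship to the optimal fair clustering, so routing each $f_j$ to a usable center incurs a constant-factor loss through the triangle inequality together with an accumulated matching-cost term. That term is exactly absorbed by the global lower bound $\cost(M) \le \mathrm{OPT}$ established in Step~1, so Steps~1 and~3 are intertwined: it is the optimality of $M$ that lets the triangle-inequality slack in Step~3 be charged back to the fair optimum.
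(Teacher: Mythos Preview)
Your proposal is correct and follows the same overall structure as the paper: the matching lower bound $\cost(M)\le\mathrm{OPT}$, the centroid decomposition $\costf(P,C)\le\cost(M)+2\cost(F,C)$, and then a triangle-inequality argument to control the cost on $F$ by $\mathrm{OPT}$. The only real difference is in Step~3. The paper proves the pointwise inequality $2\cost(F,C)\le 4\,\mathrm{OPT}+1.5\,\cost(P,C)\le 5.5\,\costf(P,C)$ for \emph{every} center set $C$, by routing each midpoint through the cheaper endpoint's nearest center in $C$; applying this with $C=C^\ast$ gives $\mathrm{OPT}_k(F)\le 2.75\,\mathrm{OPT}$. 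You instead bound $\mathrm{OPT}_k(F)$ directly by using the optimal fair centers $c_i^\ast$ and averaging the two natural assignments of $f_j$. If you actually carry out your ``routine calculation'' (rather than leaving it implicit), you get
\[
\mathrm{OPT}_k(F)\;\le\;\sum_j\Bigl(\tfrac12\|r_j-b_j\|^2+\|r_j-c_{i(r_j)}^\ast\|^2+\|b_j-c_{i(b_j)}^\ast\|^2\Bigr)=\cost(M)+\mathrm{OPT}\le 2\,\mathrm{OPT},
\]
which yields the final bound $(4\alpha+1)\,\mathrm{OPT}$ --- strictly better than the paper's $(5.5\alpha+1)$. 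So your variant is not only valid but slightly sharper; you should, however, write out that computation rather than asserting it, since the value of $c$ is exactly what determines whether the theorem's stated constant is met.
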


Similarly to the fairlet computation, the problem of finding an \emph{fair assignment}, i.e., an cost-wise optimal assignment of points to \emph{given} centers which is fair, can be modeled as a matching problem. This algorithm, as well as algorithms for fairlet computation and fair assignment for \emph{weighted} points are specified in Section~\ref{sec:appendix:algorithms} (Algorithms~\ref{alg:fairassignment}, \ref{alg:fairlets:weighted}, \ref{alg:fairassignment:weighted}).

The fairlet computation and fair assignment give rise to the following algorithms, where we use $k$-means++ as the approximation algorithm for the unconstrained $k$-means problem. It consists of a clever sampling step called $D^2$-sampling, followed by \emph{Lloyd's algorithm}. Lloyd's algorithm 
is a local search heuristic. Starting with initial centers, it alternatingly assigns points to their closest center and computes the optimum center for each cluster (the centroid). The two steps are repeated until a stopping criterion is met, for example until the algorithm is converged or has reached a maximum number of iterations. When run to convergence, the initial solution is refined to a local optimum. Because of the clever sampling, $k$-means++ is a $\mathcal{O}(\log k)$-approximation on expectation, and due to the combination with the refinement by Lloyd's algorithm, it has become the state-of-the-art algorithm for $k$-means in practive.

\texttt{CKLV-$k$-means++} computes a clustering on the fairlet representatives with $k$-means++ and then assigns both points in a fairlet to the center that their representative is assigned to. 

\begin{algorithm}
\caption{CKLV-$k$-means++}
\label{alg:fairkmeanspp:mainpart}
\let\oldnl\nl
\newcommand{\nonl}{\renewcommand{\nl}{\let\nl\oldnl}}
1: Compute fairlet representatives $F$ with Algorithm~\ref{alg:fairlets:mainpart} or \ref{alg:fairlets:weighted}\\
2: Run standard $k$-means++ on $F$ and assign fairlet points accordingly
\end{algorithm}
\vspace*{-0.5cm}
\begin{algorithm}
\caption{Reassigned-CKVL}
\label{alg:reassigned-ckvl:mainpart}
\let\oldnl\nl
\newcommand{\nonl}{\renewcommand{\nl}{\let\nl\oldnl}}
1: Compute a center set $C$ with Algorithm~\ref{alg:fairkmeanspp:mainpart}\\
2: Compute an optimal fair assignment of all points in $P$ to $C$\\
\end{algorithm}

Both variants compute a $\mathcal{O}(\log k)$-approximation to the fair $k$-means problem. 
The second variant stems from the fact that a solution computed by $k$-means++ on the fairlet representatives can be improved by improving the fair assignment of the points.

Finally, we also present a direct extension of Lloyd's algorithm to the fair $k$-means setting.
The idea is to remove the loss in the quality due to the fairlet approach. We use $k$-means++ on the fairlets to obtain an initial solution, but then run a fair variant of Lloyd's algorithm on the actual data instead of running Lloyd's algorithm on the fairlets.
When doing Lloyd's, it is the assignment step which becomes nontrivial. Changing the center of a cluster to its centroid does not violate the fairness constraint, and it is still the optimal choice for the cluster. The assignment step, however, does no longer produce legal clusterings. We have to replace it with the fair assignment step. Because of this, fair Lloyd's is not a competitive algorithm when applied to the whole input because it becomes very slow. On the other hand, it computes the best solutions in our experiments. As we will see, the use of coresets speeds up the computation so much that it allows us to use fair Lloyd's to improve the solution quality in reasonable time.

\begin{algorithm}
\caption{Fair $k$-means++}
\label{alg:fairkmeans:mainpart}
\let\oldnl\nl
\newcommand{\nonl}{\renewcommand{\nl}{\let\nl\oldnl}}
1: Compute initial centers $C_0$ by Algorithm~\ref{alg:fairkmeanspp:mainpart}\\
2: For all $i\ge 0$, unless a stopping criterion is met:\\
3: \quad Assign every point to a center $C_i$ by evoking Algorithm~\ref{alg:fairassignment}, partitioning $P$ into $P_i^1,\ldots,P_i^k$\\
4: \quad Set $C_{i+1}=\{ \mu(P_i^j) \mid j \in [k]\}$
\end{algorithm}

From a theretical point of view, we also state how to obtain a PTAS for fair $k$-means clustering by following known techniques (Algorithm~\ref{alg:k-meanstheory} in the Section~\ref{sec:appendix:algorithms}).
We also extend the PTAS to the streaming setting. For this we develop a novel combination of the coreset construction with a sketching technique due to Cohen et. al.~\cite{CohenEMMP15} which may be of independent interest. For sketching techniques, the input points are usually represented by a matrix $A \in \mathbb{R}^{n \times d}$. Our theoretical streaming algorithm can be found in the Section~\ref{sec:appendix:algorithms} (Algorithm~\ref{alg:movementsketch}). It reduces the dimensionality of the input points in a clever way. We obtain the following result.

\begin{restatable}{theorem}{thmreduction}
\label{thm:reduction1}
Let $0<\varepsilon < 1/2$.
Assume there is streaming algorithm $ALG$ that receives the rows of a matrix $A\in \mathbb{R}^{n \times d}$ and
maintains an $(\epsilon,k)$-coreset $T$ with the following property: We can replace weighted points in $T$
by a corresponding number of copies to obtain a matrix $A'$ such that 
$
\sum_{i=1}^n \|A_{i*}-A_{i'*}\|^2 \le \frac{\varepsilon^2}{16} \cdot \text{OPT}.
$
Furthermore, assume that $ALG$ uses $f(k,\varepsilon,d,\log n)$ space. If we use $ALG$ in Step $2$ of Algorithm
\ref{alg:movementsketch}, then Algorithm \ref{alg:movementsketch} will use $f(k,\varepsilon/25,c' \cdot (k/\varepsilon)^2,\log n)\cdot d + O(kd/\epsilon^2)$ space 
to compute a set of centers $C$ with 
$$
\costf(P,C) \le \gamma (1+\varepsilon) \cdot \text{OPT}
$$
where $\text{OPT}$ is the cost of an optimal solution for $A$ and $c'>0$ is a constant  such that the guarantees of Theorem 12 from
\cite{CohenEMMP15} are satisfied.
\end{restatable}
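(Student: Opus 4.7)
The plan is to execute Cohen et al.'s streaming dimensionality-reducing sketch and $ALG$ in parallel, then combine their guarantees by mimicking the argument of Theorem~\ref{thm:movbased} in two stages. Algorithm~\ref{alg:movementsketch} maintains a random linear sketch $S:\REAL^d\to\REAL^{d'}$ with $d'=c'(k/\varepsilon)^2$ dimensions (Theorem~12 of~\cite{CohenEMMP15}), which preserves the $k$-means cost of $A$ within a $(1\pm\varepsilon/25)$ factor for all center sets simultaneously. Each sketched row $(SA)_{i*}$ is fed into $ALG$ with parameter $\varepsilon/25$, producing a weighted sketched coreset $T\subset\REAL^{d'}$ whose movement map partitions the rows of $SA$ into groups $G_t$, $t\in T$, with total movement at most $(\varepsilon/25)^2/16 \cdot \mathrm{OPT}_{SA}$. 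For each $t\in T$ we additionally maintain, in $\REAL^d$, the centroid of the original rows in $G_t$; these centroids form the lifted coreset $T'\subset\REAL^d$, which is then handed to a $\gamma$-approximation for fair $k$-means to produce $C$. The $O(kd/\varepsilon^2)$ summand in the space bound covers the sketch $S$ (a linear map with $d\cdot d'$ entries), while the $f(k,\varepsilon/25,d',\log n)\cdot d$ summand covers $ALG$ operating in $d'$ dimensions plus the $d$ coordinates per coreset point needed for the lifting.

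For correctness, I would fix an arbitrary coloring constraint $K$ and center set $C\subset\REAL^d$ and show $\costc(T',K,C)\in(1\pm O(\varepsilon))\costc(A,K,C)$, which via Proposition~\ref{lem:colcostmodels} yields a fair coreset guarantee. The argument proceeds in two stages. First, apply the Cohen sketch to convert $\costc(A,K,C)$ into its sketched-space analogue $\costc(SA,K,SC)$ up to a factor of $(1\pm\varepsilon/25)$; this works for coloring-constrained costs because the sketch preserves distances to any fixed set of centers and the constraint only restricts which point-to-center assignments are legal. Second, apply the movement bound from $ALG$ on the sketched stream, which is at most $(\varepsilon/25)^2/16 \cdot \mathrm{OPT}_{SA}\le (\varepsilon/25)^2/16\cdot(1+\varepsilon/25)\mathrm{OPT}$. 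Lifting back, Proposition~\ref{prop:zauberformel} says that the $1$-means cost of each group $G_t$ in $\REAL^d$ is minimized at the stored centroid $\mu(G_t)$ (which is exactly the corresponding point of $T'$), so an assignment-cost change argument identical in form to the triangle-inequality calculation in Theorem~\ref{thm:movbased} (applied once in sketched space and once to the centroid decomposition in original space) produces $\costc(T',K,C)\in(1\pm\varepsilon/5)\costc(A,K,C)$. Bookkeeping each group $G_t$ by color, as in the proof of Theorem~\ref{thm:movbased}, splits each lifted point into at most $\ell$ color-separated weighted copies so that $T'$ can realize the correct per-color counts of any $K$. Finally, running the $\gamma$-approximation on $T'$ and collecting constants gives $\costf(P,C)\le \gamma(1+\varepsilon)\mathrm{OPT}$; the choice $\varepsilon/25$ precisely balances the two $(1\pm\varepsilon/25)$ factors and the movement slack.

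The main obstacle is that the movement guarantee of $ALG$ lives in $\REAL^{d'}$, whereas the coreset guarantee we need lives in $\REAL^d$, and $S$ is not invertible so distances cannot simply be pulled back through $S^{-1}$. My plan is to avoid bounding the original-space movement distances directly. Instead I rely on Proposition~\ref{prop:zauberformel}: in $\REAL^d$ the contribution of a group $G_t$ to $\costc(A,K,C)$ factors into an intra-group term (which is minimized by the stored centroid $\mu(G_t)\in\REAL^d$ and therefore matches $T'$'s contribution exactly) and a centroid-to-center term $|G_t|\cdot\|\mu(G_t)-c\|^2$. The latter is a squared distance in the original space between the centroid and a fixed center, which is exactly the type of quantity that the Cohen sketch preserves up to a $(1\pm\varepsilon/25)$ factor. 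This routes the entire error analysis through sketch-preserved quantities rather than through an impossible-to-bound lift of movement distances, and is what makes the composition with $ALG$'s movement property work.
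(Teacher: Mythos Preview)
Your approach diverges from the paper's in a way that introduces a genuine gap. The paper never attempts to show that any lifted object is a fair coreset in $\REAL^d$. Instead it works entirely with clustering matrices: it fixes the fair clustering matrix $Z$ returned by the approximation algorithm on $T$ (in sketched space), the optimal fair clustering matrix $Y$ for $A$, and the optimal fair clustering matrix $X$ for the sketched coreset, and then chains Frobenius-norm triangle inequalities of the form
\[
(1-\varepsilon')\|A-ZZ^TA\|_F^2 \;\le\; \|AS-ZZ^TAS\|_F^2 \;\le\;\cdots\;\le\; (1+\varepsilon')^3\gamma\,\|A-YY^TA\|_F^2,
\]
invoking Definition~\ref{def:sketch} only at the two endpoints and the movement bound $\|A'S-AS\|_F\le\tfrac{\varepsilon'}{4}\sqrt{\text{OPT}_{AS}}$ in between. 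Crucially, every quantity that meets the sketch guarantee is of the exact form $\|M-XX^TM\|_F^2$ for a rank-$k$ orthogonal projection $XX^T$, which is all Definition~\ref{def:sketch} promises.

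Your plan relies on a stronger, unjustified property of the sketch. You assert that the Cohen et al.\ projection with target dimension $O(k/\varepsilon^2)$ ``preserves distances to any fixed set of centers'' and hence preserves $\costc(A,K,C)$ for arbitrary $C\subset\REAL^d$, and later that individual terms $|G_t|\cdot\|\mu(G_t)-c\|^2$ are ``exactly the type of quantity that the Cohen sketch preserves.'' Neither follows from Definition~\ref{def:sketch}: that definition controls only $\|A-XX^TA\|_F^2$, i.e.\ costs in which each cluster is charged to its own centroid, not costs against externally fixed centers. Preserving all point-to-center distances would be a Johnson--Lindenstrauss statement requiring $\Theta(\log n/\varepsilon^2)$ dimensions, not $O(k/\varepsilon^2)$. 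Since your entire ``routing'' of the error through sketch-preserved quantities rests on this claim, the argument does not go through. A secondary issue is that Algorithm~\ref{alg:movementsketch} runs the $\gamma$-approximation on $T$ in sketched space and only lifts the final centers via the stored sums $L(T_{i*})$; it does not hand a lifted coreset $T'\subset\REAL^d$ to the approximation algorithm as you describe, so even a correct coreset bound for $T'$ would not directly prove the stated theorem.
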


Notice that combining this algorithm with Corollary~\ref{cor:fairbico} replaces the exponential dependency on $d$ with an exponential dependency on $k$. It is thus viable for very small values of $k$.

\section{Empirical evaluation}

Here we evaluate the approximation algorithms and our coreset approach empirically.

\paragraph{Algorithms} We compare \emph{CKLV-$k$-means++} (Algorithm~\ref{alg:fairkmeanspp:mainpart}), \emph{Reassigned-{CKLV}} (Algorithm~\ref{alg:reassigned-ckvl:mainpart}) and \emph{fair $k$-means++} (Algorithm~\ref{alg:fairkmeans:mainpart}). When we need to execute the algorithms on coresets, which are weighted, we use the algorithms for weighted inputs in Section~\ref{sec:appendix:weighted}.

\paragraph{Data sets} We use the same data sets as studied in~\cite{CKLV17} and choose the same sensitive dimensions.
The data sets all originate from the UCI library~\cite{UCI}.
We processed the data sets, converted or deleted non-numerical features and deleted data points with missing entries, 
and made sure that the data sets are balanced. We get data sets with the following properties:
\texttt{Adult:} A US census record data set from 1994 ($n=21542; d=6$; sensitive attribute: gender).
\texttt{Diabetes}: A data set about a study with diabetis patients ($n=94116; d=29$; sensitive attribute: marital status)
\texttt{Bank:} A data set about a marketing campaign of a Portuguese bank ($n=34600; d=11$; sensitive attribute: gender).



\paragraph{Setting}
All implementations were made in C or C++. For the minimum cost flow algorithm, we used capacity scaling as implemented in the C++ \emph{Library for Efficient Modeling and Optimization in Networks (LEMON)}~\cite{lemon}. As the coreset algorithm, we use BICO~\cite{FGSSS13}, following Corollary~\ref{cor:fairbico}. We (heuristicaally) set the coreset size to $200k$.
%
%
All experiments were run on one core of a Intel(R) Xeon(R) E3-1240 v6 processor with 32GB of main memory. As the stopping criterion for all Lloyd's variants, we use a maximum of $100$ iterations.
Since the $k$-means++ seeding introduces (a small amount of) variance, we repeat all experiments five times to account for the randomness.

\begin{figure}
\begin{tikzpicture}[scale=0.9]
\begin{axis}[xlabel=Number of points $n$,ylabel=Cost,thick, legend pos=south east,title=data set \texttt{diabetic},legend style={cells={align=left}},ymin=0.87,ymax=1.06,]
\addplot table [x=n, y=input kmeans++, col sep=semicolon] {\resultspath/diabetic-balancedCOR.auswertung.normalized.csv};
\addplot table [x=n, y=coreset kmeans++ c, col sep=semicolon] {\resultspath/diabetic-balancedCOR.auswertung.normalized.csv};
\legend{\texttt{Fair $k$-means++} on input,\texttt{Fair $k$-means++} on coreset}
\end{axis}
\end{tikzpicture}
\begin{tikzpicture}[scale=0.9]
\begin{axis}[xlabel=Number of points $n$,ylabel=Running time,thick, legend pos=north west,title=data set \texttt{diabetic},legend style={cells={align=left}}]
\addplot table [x=n, y expr=\thisrowno{1}+\thisrowno{2}+\thisrowno{3}, col sep=semicolon] {\resultspath/diabetic-balancedCOR.auswertung.time.csv};
\addplot table [x=n, y expr=\thisrowno{4}+\thisrowno{5}+\thisrowno{6}+\thisrowno{8}, col sep=semicolon] {\resultspath/diabetic-balancedCOR.auswertung.time.csv};
\addplot table [x=n, y expr=\thisrowno{1}+\thisrowno{2},col sep=semicolon]{\resultspath/diabetic-balancedCOR.fairalgo.auswertung.time.csv};
\legend{\texttt{Fair $k$-means++} on input,\texttt{Fair $k$-means++} on coreset,\texttt{CKLV-$k$-means++} on input}
\end{axis}
\end{tikzpicture}
\begin{tikzpicture}[scale=0.9]
\begin{axis}[xlabel=Number of points $n$,ylabel=Cost,thick, legend pos=south east,title=data set \texttt{bank},legend style={cells={align=left}},ymin=0.87,ymax=1.06,]
\addplot table [x=n, y=input kmeans++, col sep=semicolon] {\resultspath/bank-full-balancedCOR.auswertung.normalized.csv};
\addplot table [x=n, y=coreset kmeans++ c, col sep=semicolon] {\resultspath/bank-full-balancedCOR.auswertung.normalized.csv};
\legend{\texttt{Fair $k$-means++} on input,\texttt{Fair $k$-means++} on coreset}
\end{axis}
\end{tikzpicture}
\begin{tikzpicture}[scale=0.9]
\begin{axis}[xlabel=Number of points $n$,ylabel=Running time,thick, legend pos=north west,title=data set \texttt{bank},legend style={cells={align=left}}]
\addplot table [x=n, y expr=\thisrowno{1}+\thisrowno{2}+\thisrowno{3}, col sep=semicolon] {\resultspath/bank-full-balancedCOR.auswertung.time.csv};
\addplot table [x=n, y expr=\thisrowno{4}+\thisrowno{5}+\thisrowno{6}+\thisrowno{8}, col sep=semicolon] {\resultspath/bank-full-balancedCOR.auswertung.time.csv};
\addplot table [x=n, y expr=\thisrowno{1}+\thisrowno{2},col sep=semicolon]{\resultspath/bank-full-balancedCOR.fairalgo.auswertung.time.csv};
\legend{\texttt{Fair $k$-means++} on input,\texttt{Fair $k$-means++} on coreset,\texttt{CKLV-$k$-means++} on input}
\end{axis}
\end{tikzpicture}
\begin{tikzpicture}[scale=0.9]
\begin{axis}[xlabel=Number of points $n$,ylabel=Cost,thick, legend pos=south east,title=data set \texttt{adult},legend style={cells={align=left}},ymin=0.87,ymax=1.06,]
\addplot table [x=n, y=input kmeans++, col sep=semicolon] {\resultspath/adult-balancedCOR.auswertung.normalized.csv};
\addplot table [x=n, y=coreset kmeans++ c, col sep=semicolon] {\resultspath/adult-balancedCOR.auswertung.normalized.csv};
\legend{\texttt{Fair $k$-means++} on input,\texttt{Fair $k$-means++} on coreset}
\end{axis}
\end{tikzpicture}
\begin{tikzpicture}[scale=0.9]
\begin{axis}[xlabel=Number of points $n$,ylabel=Running time,thick, legend pos=north west,title=data set \texttt{adult},legend style={cells={align=left}},skip coords between index={10}{12}]
\addplot table [x=n, y expr=\thisrowno{1}+\thisrowno{2}+\thisrowno{3}, col sep=semicolon] {\resultspath/adult-balancedCOR.auswertung.time.csv};
\addplot table [x=n, y expr=\thisrowno{4}+\thisrowno{5}+\thisrowno{6}+\thisrowno{8}, col sep=semicolon] {\resultspath/adult-balancedCOR.auswertung.time.csv};
\addplot table [x=n, y expr=\thisrowno{1}+\thisrowno{2},col sep=semicolon]{\resultspath/adult-balancedCOR.fairalgo.auswertung.time.csv};
\legend{\texttt{Fair $k$-means++} on input,\texttt{Fair $k$-means++} on coreset,\texttt{CKLV-$k$-means++} on input}
\end{axis}
\end{tikzpicture}
\caption{Evaluation of the coreset with respect to quality decrease and runtime improvement. The left side shows the cost of the computed solutions. The highest deviation occurs for \texttt{bank}. The right side shows the improvement in the running time.\label{fig:coreseteval}}
\end{figure}
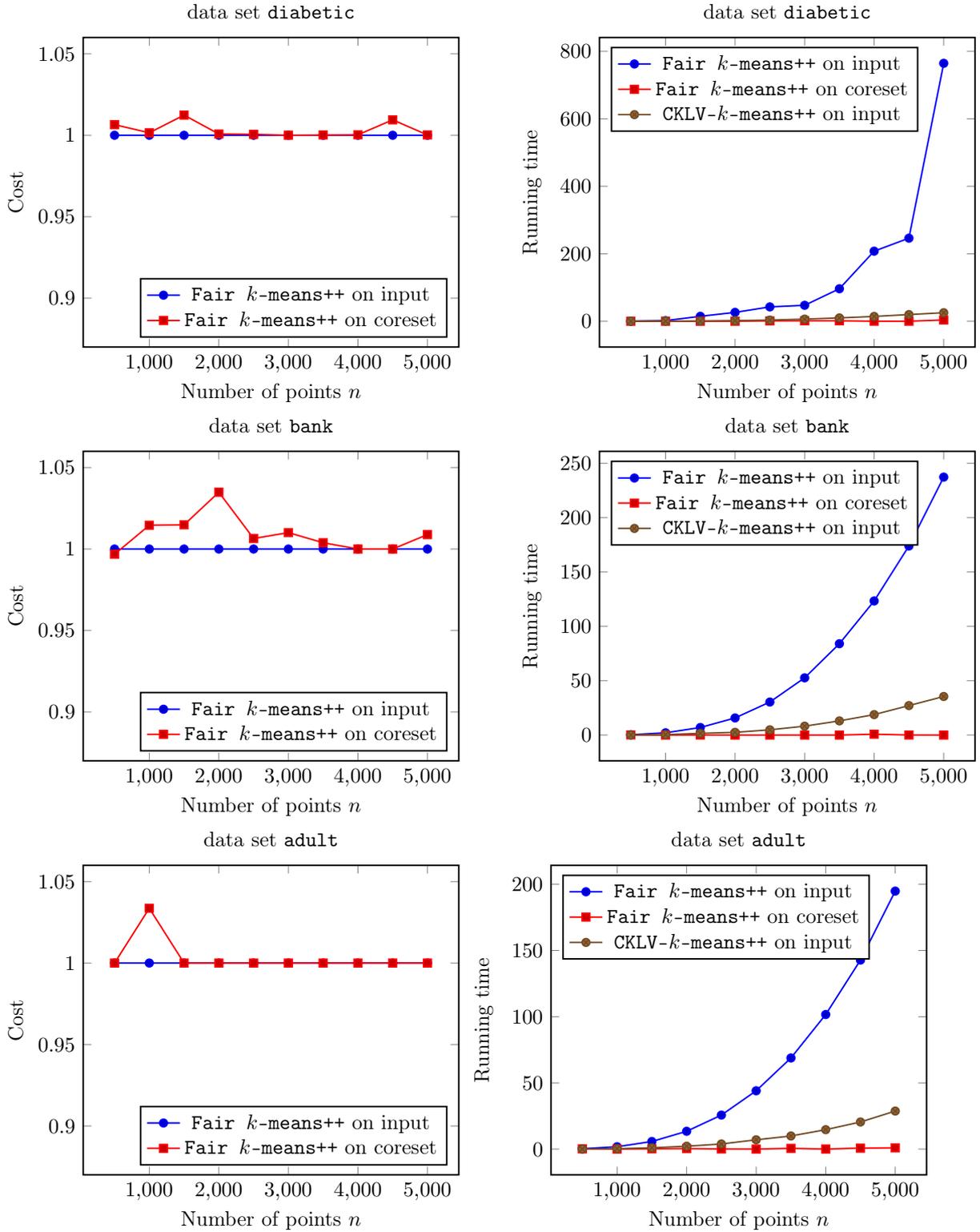

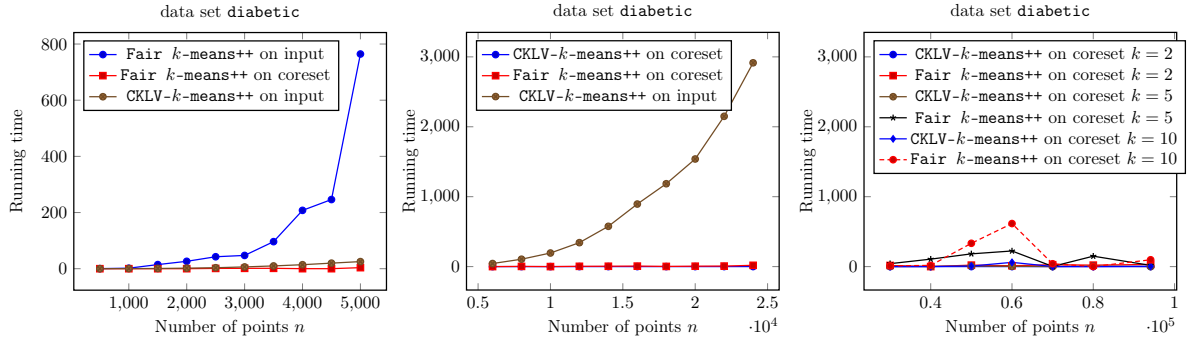
\begin{figure}
\begin{tikzpicture}[scale=0.6]
\begin{axis}[xlabel=Number of points $n$,ylabel=Running time,thick, legend pos=north west,title=data set \texttt{diabetic},legend style={cells={align=left}}]
\addplot table [x=n, y expr=\thisrowno{1}+\thisrowno{2}+\thisrowno{3}, col sep=semicolon] {\resultspath/diabetic-balancedCOR.auswertung.time.csv};
\addplot table [x=n, y expr=\thisrowno{4}+\thisrowno{5}+\thisrowno{6}+\thisrowno{8}, col sep=semicolon] {\resultspath/diabetic-balancedCOR.auswertung.time.csv};
\addplot table [x=n, y expr=\thisrowno{1}+\thisrowno{2},col sep=semicolon]{\resultspath/diabetic-balancedCOR.fairalgo.auswertung.time.csv};
\legend{\texttt{Fair $k$-means++} on input,\texttt{Fair $k$-means++} on coreset,\texttt{CKLV-$k$-means++} on input}
\end{axis}
\end{tikzpicture}
\begin{tikzpicture}[scale=0.6]
\begin{axis}[xlabel=Number of points $n$,ylabel=Running time,thick, legend pos=north west,title=data set \texttt{diabetic},legend style={cells={align=left}},ymax=3350]
\addplot table [x=n, y expr=\thisrowno{1}+\thisrowno{2}+\thisrowno{3},col sep=semicolon]{\resultspath/diabetic.lastexperiment.auswertung.time.csv};
\addplot table [x=n, y expr=\thisrowno{1}+\thisrowno{4}+\thisrowno{5},col sep=semicolon]{\resultspath/diabetic.lastexperiment.auswertung.time.csv};
\addplot table [x=n, y expr=\thisrowno{6}+\thisrowno{7},col sep=semicolon]{\resultspath/diabetic.lastexperiment.auswertung.time.csv};
\legend{\texttt{CKLV-$k$-means++} on coreset,\texttt{Fair $k$-means++} on coreset,\texttt{CKLV-$k$-means++} on input}
\end{axis}
\end{tikzpicture}
\begin{tikzpicture}[scale=0.6]
\begin{axis}[xlabel=Number of points $n$,ylabel=Running time,thick, legend pos=north west,title=data set \texttt{diabetic},legend style={cells={align=left}},ymax=3350]
\addplot table [x=n, y expr=\thisrowno{2}+\thisrowno{3}+\thisrowno{4},col sep=semicolon]{\resultspath/diabetic.lastexperimentBIG.auswertung.time400.csv};
\addplot table [x=n, y expr=\thisrowno{2}+\thisrowno{5}+\thisrowno{6},col sep=semicolon]{\resultspath/diabetic.lastexperimentBIG.auswertung.time400.csv};
\addplot table [x=n, y expr=\thisrowno{2}+\thisrowno{3}+\thisrowno{4},col sep=semicolon]{\resultspath/diabetic.lastexperimentBIG.auswertung.time1000.csv};
\addplot table [x=n, y expr=\thisrowno{2}+\thisrowno{5}+\thisrowno{6},col sep=semicolon]{\resultspath/diabetic.lastexperimentBIG.auswertung.time1000.csv};
\addplot table [x=n, y expr=\thisrowno{2}+\thisrowno{3}+\thisrowno{4},col sep=semicolon]{\resultspath/diabetic.lastexperimentBIG.auswertung.time2000.csv};
\addplot table [x=n, y expr=\thisrowno{2}+\thisrowno{5}+\thisrowno{6},col sep=semicolon]{\resultspath/diabetic.lastexperimentBIG.auswertung.time2000.csv};
\legend{\texttt{CKLV-$k$-means++} on coreset $k=2$,\texttt{Fair $k$-means++} on coreset $k=2$, \texttt{CKLV-$k$-means++} on coreset $k=5$,\texttt{Fair $k$-means++} on coreset $k=5$,\texttt{CKLV-$k$-means++} on coreset $k=10$,\texttt{Fair $k$-means++} on coreset $k=10$}
\end{axis}
\end{tikzpicture}
\caption{Runtime development for \texttt{diabetic}\label{fig:runtime:diabetic:mainpart}.}
\end{figure}

\paragraph{Results}
We did experiments with small, medium and large data sets.
To evaluate the effect of the coreset on the quality, we can only use small data sets because we need to compute solutions on the input data as well to compare the quality. Thus, we use small subsampled data sets of small increasing size between $500$ and $5000$. 
We then compute a $2$-clustering on the data (the small choice of $k$ is due to the high running time of \texttt{Fair $k$-means++}) and compare the quality of the solutions computed on the input set with those computed on the coreset (evaluated on the input set). 
The left side of Figure~\ref{fig:coreseteval} shows the coreset quality for all data sets. The error is low, the largest error (up to 3,5\%) occurs for the data set \texttt{bank}.

On the right side Figure~\ref{fig:coreseteval}, we depict the running time. We can clearly see that the runtime of \texttt{Fair $k$-means++} does not scale even to medium sized data. The coreset approach is relatively unaffected by the increasing data size, since the coreset computation is nearly linear and the coreset size only depends on $k$. \texttt{CKLV-$k$-means++} is a much faster alternative to \texttt{Fair $k$-means++}, but we already anticipate that it will not scale to big data.

To further evaluate the running times on larger data sets, we did further experiments for the data set \texttt{diabetic} (because it is the largest of the data sets). The first diagram of Figure~\ref{fig:runtime:diabetic:mainpart} shows the runtime of all algorithms on the small data sets (identical to the corresponding diagram in Figure~\ref{fig:coreseteval}). 
In the second diagram, we see results for data sets of medium size. As expected, \texttt{CKLV-$k$-means++} can handle middle sized data set. However, at $n=26.000$, the memory requirement of the fairlet computation became larger than the main memory, making the approach basically infeasible. 
Notice that we could speed up the algorithm somewhat by improving the implementation. However, there is no simple way of bypassing the problem that the bipartite graph needed for the fair assignments grows quadratically in the input point size. 
The computations on the coreset, however, are unaffected by the increase in the data. We can easily perform even \texttt{Fair $k$-means++} on the coreset.

Indeed, the running time for the coreset algorithms mainly scales with $k$: That is because the coreset size depends on $k$, so the computation time on the coreset increases with $k$. We see this on the right side of Figure~\ref{fig:runtime:diabetic:mainpart}, where we see the running times of the algorithms on the \texttt{diabetic} data set up to its full size, for $k=2$, $k=5$ and for $k=10$. As we can see, the choice of $k$ has more effect on the running time than increasing the size of the input data.

Figure~\ref{fig:algocomparison} compares the solution quality of \texttt{Fair $k$-means++}, \texttt{CKLV-$k$-means++} and \texttt{Reassigned-CKLV} on subsampled data sets of size 1000 with increasing values of $k$. The diagrams are normalized to the cost of  \texttt{CKLV-$k$-means++} (blue).
 
The left side shows the quality on all data sets.  The right side shows the costs of the fairlets for all data sets. The cost of the fairlet micro-clustering can be seen as the \lq cost of fairness\rq. It is a lower bound on the cost of any clustering.

As we can see, the fairlets are very expensive for \texttt{bank} and \texttt{adult} (probably due to a large gender bias that is present in these data sets). Compared to the cost for the fairlets, the actual clustering cost gets negligible with increasing $k$. This reflects in the quality experiments as well. For \texttt{bank} and \texttt{adult}, the algorithms perform basically identically well. 

For data set \texttt{diabetic}, the price of fairness is not as high, leaving some room for optimization to the algorithms. Here we see that \texttt{Fair $k$-means++} achieves the best cost. 
\texttt{Fair $k$-means++} (red) gains up to 5\% in quality on \texttt{diabetic} compared to the optimized fairlet approach \texttt{Reassigned-CKLV}. Compared to the unoptimized solution of \texttt{CKLV-$k$-means++}, the gain is higher.

\begin{figure}
\begin{tikzpicture}[scale=0.9]
\begin{axis}[xlabel=Number of centers $k$,ylabel=Cost / Fairlet+$k$-means++ cost,thick, legend pos=south east,ymin=0.87,ymax=1.05,skip coords between index={0}{2},title=data set \texttt{diabetic}]
\addplot table [x=k, y=fairlet algo cost, col sep=semicolon] {\resultspath/diabetic-balanced1000.auswertung.normalized.csv};
\addplot table [x=k, y=Lloyd with Fairlet Init, col sep=semicolon] {\resultspath/diabetic-balanced1000.auswertung.normalized.csv};
\addplot table [x=k, y=sum, col sep=semicolon] {\resultspath/diabetic-balanced1000.auswertung.normalized.csv};
\legend{\texttt{Reassigned-CKLV}, \texttt{Fair $k$-means++}, \texttt{CKLV-$k$-means++}}
\end{axis}
\end{tikzpicture}
\begin{tikzpicture}[scale=0.9]
\begin{axis}[xlabel=Number of centers $k$,ylabel=Cost,thick, legend pos=north east,skip coords between index={0}{2},title=data set \texttt{diabetic},legend style={cells={align=left}}] 
\addplot table [x=k, y=fairlet algo cost, col sep=semicolon] {\resultspath/diabetic-balanced1000.auswertung.csv};
\addplot table [x=k, y=fairlet cost, col sep=semicolon,mark=none] {\resultspath/diabetic-balanced1000.auswertung.csv};
\legend{\texttt{Reassigned-CKLV}, Fairlet Cost}
\end{axis}
\end{tikzpicture}
\begin{tikzpicture}[scale=0.9]
\begin{axis}[xlabel=Number of centers $k$,ylabel=Cost / Fairlet+$k$-means++ cost,thick, legend pos=south east,ymin=0.87,ymax=1.05,skip coords between index={0}{2},title=data set \texttt{bank}]
\addplot table [x=k, y=fairlet algo cost, col sep=semicolon] {\resultspath/bank-full-balanced1000.auswertung.normalized.csv};
\addplot table [x=k, y=Lloyd with Fairlet Init, col sep=semicolon] {\resultspath/bank-full-balanced1000.auswertung.normalized.csv};
\addplot table [x=k, y=sum, col sep=semicolon] {\resultspath/bank-full-balanced1000.auswertung.normalized.csv};
\legend{\texttt{Reassigned-CKLV}, \texttt{Fair $k$-means++}, \texttt{CKLV-$k$-means++}}
\end{axis}
\end{tikzpicture}
\begin{tikzpicture}[scale=0.9]
\begin{axis}[xlabel=Number of centers $k$,ylabel=Cost,thick, legend pos=north east,skip coords between index={0}{2},title=data set \texttt{bank},legend style={cells={align=left}}]
\addplot table [x=k, y=fairlet algo cost, col sep=semicolon] {\resultspath/bank-full-balanced1000.auswertung.csv};
\addplot table [x=k, y=fairlet cost, col sep=semicolon,mark=none] {\resultspath/bank-full-balanced1000.auswertung.csv};
\legend{\texttt{Reassigned-CKLV}, Fairlet Cost}
\end{axis}
\end{tikzpicture}
\begin{tikzpicture}[scale=0.9]
\begin{axis}[xlabel=Number of centers $k$,ylabel=Cost / Fairlet+$k$-means++ cost,thick, legend pos=south east,ymin=0.87,ymax=1.05,skip coords between index={0}{2},title=data set \texttt{adult}]
\addplot table [x=k, y=fairlet algo cost, col sep=semicolon] {\resultspath/adult-balanced1000.auswertung.normalized.csv};
\addplot table [x=k, y=Lloyd with Fairlet Init, col sep=semicolon] {\resultspath/adult-balanced1000.auswertung.normalized.csv};
\addplot table [x=k, y=sum, col sep=semicolon] {\resultspath/adult-balanced1000.auswertung.normalized.csv};
\legend{\texttt{Reassigned-CKLV}, \texttt{Fair $k$-means++}, \texttt{CKLV-$k$-means++}}
\end{axis}
\end{tikzpicture}
\begin{tikzpicture}[scale=0.9]
\begin{axis}[xlabel=Number of centers $k$,ylabel=Cost,thick, legend pos=north east,skip coords between index={0}{2},title=data set \texttt{adult},legend style={cells={align=left}}]
\addplot table [x=k, y=fairlet algo cost, col sep=semicolon] {\resultspath/adult-balanced1000.auswertung.csv};
\addplot table [x=k, y=fairlet cost, col sep=semicolon,mark=none] {\resultspath/adult-balanced1000.auswertung.csv};
\legend{\texttt{Reassigned-CKLV}, Fairlet Cost}
\end{axis}
\end{tikzpicture}
\caption{Comparison of the solution quality. The left side compares the three approaches. The right side shows how the clustering cost relates to the cost of the fairlet computation. For \texttt{bank} and \texttt{adult}, the clustering cost is dominated by the \lq cost of fairness\rq. For \texttt{diabetic}, this is not the case, implying that there is optimization potential for the approximation algorithms.\label{fig:algocomparison}}
\end{figure}
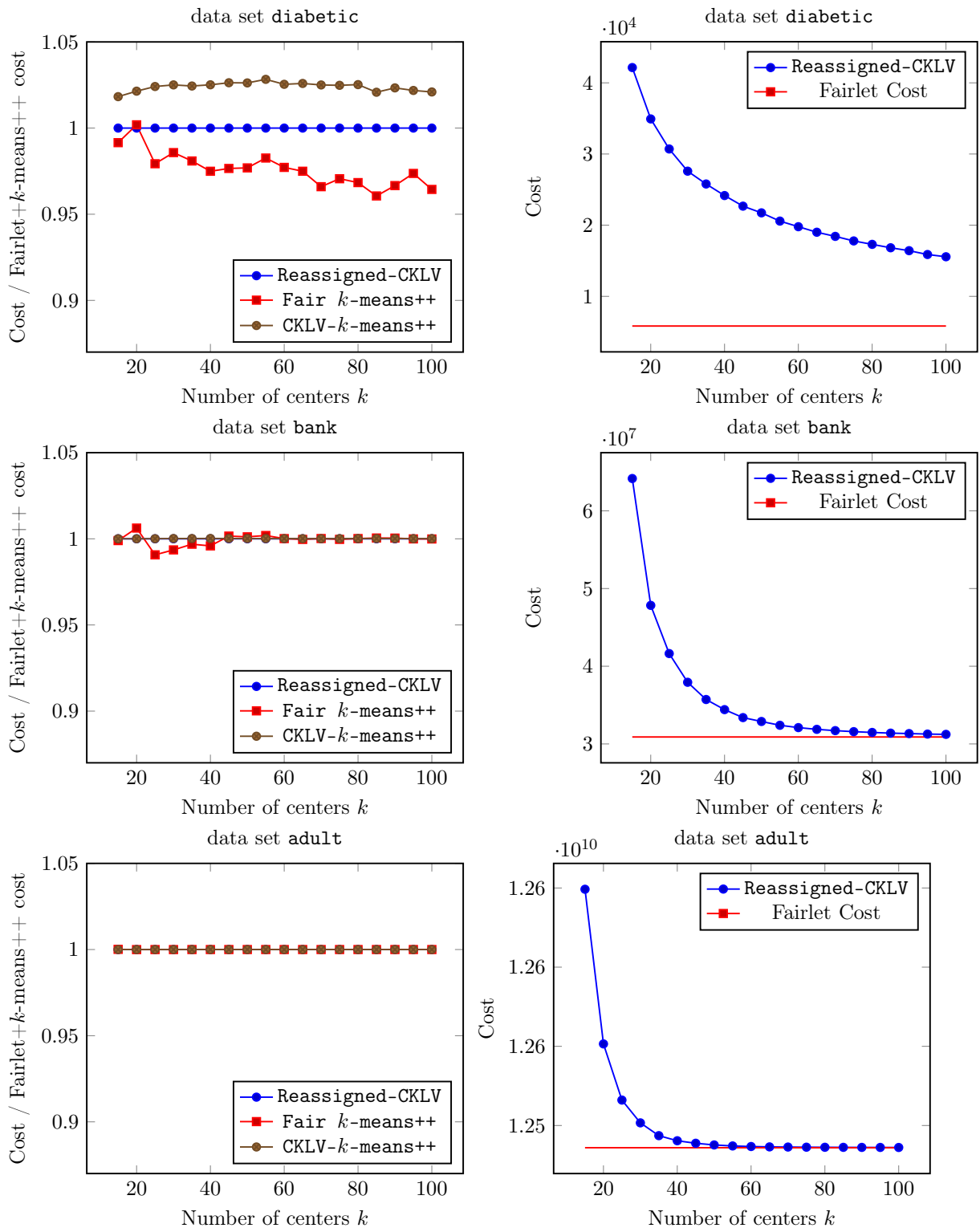

\section{Approximation algorithms for fair \texorpdfstring{$k$}{k}-means (full)}\label{sec:appendix:algorithms}
\subsection{Lloyd's algorithm and \texorpdfstring{$k$}{k}-means++}\label{subsec:appendix:lloydkmeans++}

The most known heuristic for (colorless) $k$-means in practice is the $k$-means algorithm / Lloyd's algorithm~\cite{L57}.
It is a local search heuristic that alternates an assignment step with recomputing the centers; it is of similar nature as EM type algorithms. We state the $k$-means algorithm as Algorithm~\ref{alg:kmeans}.

\begin{algorithm}
\caption{$k$-means/Lloyd's algorithm}
\label{alg:kmeans}
\let\oldnl\nl
\newcommand{\nonl}{\renewcommand{\nl}{\let\nl\oldnl}}
1: Compute initial centers $C_0$\\
2: For all $i\ge 0$, unless a stopping criterion is met:\\
3: \quad Assign every point to its closest center in $C_i$, partitioning $P$ into $P_i^1,\ldots,P_i^k$\\
4: \quad Set $C_{i+1}=\{ \mu(P_i^j) \mid j \in [k]\}$
\end{algorithm}

Both steps of the algorithm are improving (or at least do not worsen the solution): The cheapest way to assign points to centers is to assign every point optimally, and the centroid is always the best center for a cluster by Proposition~\ref{prop:zauberformel}.
A typical stopping criterion is when a given number of iterations is reached, or when the decrease of the objective function per iteration falls below a given threshold (notice, however, that this stopping criterion has to be parametrized in a data dependent fashion). When implementing Lloyd's, some care should be given to the possible event that a cluster runs out of points. A typical reaction to this is to choose a new center randomly (or by $k$-means++ seeding). 

The $k$-means algorithm crucially depends on the initialization. The de facto standard initialization method has become \emph{$k$-means++ seeding}, which guarantees a $\mathcal{O}(\log k)$ approximation; the combination of seeding and $k$-means++ is called the $k$-means++ algorithm, see Algorithm~\ref{alg:kmeanspp}.
\begin{algorithm}
\caption{$k$-means++}
\label{alg:kmeanspp}
\let\oldnl\nl
\newcommand{\nonl}{\renewcommand{\nl}{\let\nl\oldnl}}
1: Initialize $C=\{p\}$ with a random point $p\in P$\\
2: {\bf While} $|C|<k$\\
3: ~~~ add $q\in P$ to $C$ with probability $\frac{\underset{c\in C}{\min}\|q-c\|^2}{\sum_{q'\in P}\underset{c\in C}{\min}\|q-c\|^2}$\\
4: Run Algorithm~\ref{alg:kmeanspp} with $C_0=C$ as initial centers
\end{algorithm}

\subsection{Fairlets for \texorpdfstring{$k$}{k}-means}\label{subsec:appendix:fairletskmeans}
Chierichetti et. al. \cite{CKLV17} outline how to transfer approximation algorithms for clustering to the setting of fair clustering via the computation of \emph{fairlets}, but derive the algorithms only for $k$-center and $k$-median. 
We give a method to compute fairlets for $k$-means in Algorithm ~\ref{alg:fairlets}. Computing an $\alpha$-approximation for $k$-means on the fairlets provides a $(5\alpha+1)$-approximation for the fair $k$-means clustering problem:

\begin{algorithm}
\caption{Fairlet computation}
\label{alg:fairlets}
\let\oldnl\nl
\newcommand{\nonl}{\renewcommand{\nl}{\let\nl\oldnl}}
1: Let $B$ be the blue points and $R$ be the red points\\
2: For any $b \in B$, $r \in R$, set $c(r,b) = ||r-b||^2/2$\\
3: Consider the complete bipartite graph $G$ on $B$ and $R$\\
4: Compute a min cost perfect matching $M$ on $G$\\
5: For each edge $(r,b)\in M$, add $\mu(\{r,b\})$ to $F$\\
6: Output $F$
\end{algorithm}

\thmfairletskmeans*

\begin{proof}
Fix some optimal fair $k$-means clustering solution $C^\ast$ on $P$. 
Observe that $C^\ast$ partitions $P$ into $n/2$ pairs, each consisting of a blue and a red point. Let $(r,b)$ be a pair which is assigned to center $c \in C^\ast$ in the optimal solution. This costs more than assigning both $r$ and $b$ to their optimum center, namely, the centroid of $r$ and $b$. The cost of assigning $r$ and $b$ to $\mu(\{r\},\{b\})$ is $||r-b||^2/2$. Thus, the sum of $||r-b||^2/2$ summed over all pairs is at most $OPT$. 
In Algorithm~\ref{alg:fairlets}, we set the cost $c(r,b)$ to $||r-b||^2/2$. We just argued that there is a partitioning of the points into $n/2$ bichromatic pairs such that the sum of the costs is at most $OPT$. By computing a min cost perfect matching, we obtain the cheapest such partitioning, which thus also costs at most $OPT$.

Now let $C$ be any solution, and assign all points in $F$ to their closest center. 
By Proposition~\ref{prop:zauberformel}, 
\[
\sum_{\mu(\{r,b\})\in F} \min_{c \in C} ||r-c||^2+||b-c||^2 = 2\cost(F,C) + c(M).
\]
Thus, $2\cost(F,C)+c(M)$ is the cost that we pay if we insist on clustering both points of a fairlet together. If we compute an optimal assignment of $P$ to $C$, then the cost can only be smaller. 

It remains to argue that $2 \cost (F,C)+2c(M)$ is small.
First recall that 
\begin{equation}
\sum_{\mu(\{r,b\})\in F} ||r-b||^2 / 2 \le OPT\label{eqnaff},
\end{equation}
so $c(M) \le OPT$. 
Furthermore, say w.l.o.g. that $r$ is cheaper in $\cost(P,C)$ than $b$, and consider the center $c \in C$ which is the closest center for $r$. Then $r$ pays the same as it would pay in $\cost(P,C)$. We can estimate the cost of $b$ by $||b-c||^2 \le 2||b-r||^2 + 2||r-c||^2 \le 2||b-r||^2 + 0.5\cdot||r-c||^2 + 1.5\cdot||b-c'||^2$, where $c' \in C$ is the closest center to $b$. It we sum this up for all pairs, we get:
\begin{align*}
& 2 \cost (F,C)\\
\le & \sum_{\mu(\{r,b\})\in F} 2||b-r||^2 + 1.5\cdot||r-c||^2 + 1.5\cdot ||b-c'||^2\\
\le& 4 \cdot OPT + 1.5 \cdot \cost(P,C) \le 5.5 \cdot \costf(P,C),
\end{align*}
where we use \eqref{eqnaff} for the second inequality. 
If we now compute an $\alpha$-approximation on $F$, we get a solution with $2 \cost(F,C) \le 5.5 \alpha OPT$, leading to a total cost of $2 \cost(F,C) + c(M) \le (5.5\alpha +1) \cdot \costf(P,C)$.
\end{proof}


\subsection{Fair assignment}

\begin{algorithm}
\caption{Fair assignment}
\label{alg:fairassignment}
\let\oldnl\nl
\newcommand{\nonl}{\renewcommand{\nl}{\let\nl\oldnl}}
1: Let $B$ be blue points, $R$ red points and $C$ centers\\
2: $\forall b \in B, r \in R$, set $c(r,b) = \min_{c\in C}||r-c||^2+||b-c||^2$ and $cen(r,b)=\arg\min{c \in C} ||r-c||^2+||b-c||^2$\\ 
3: Consider the complete bipartite graph $G$ on $B$ and $R$\\
4: Compute a min cost perfect matching $M$ on $G$ wrt $c$\\
5: For each edge $(r,b)\in M$, assign $r$ and $b$ to $cen(r,b)$\\
6: Output $F$
\end{algorithm}

With the fairness constraint, assigning points to a given set of centers $C$ becomes non-trivial. Similarly to the fairlet computation, we can model the assignment step as a matching problem. The points are the vertices. Between any blue point $b$ and any red point $r$, we insert an edge whose cost is the minimum cost of assigning $b$ and $r$ to the same center. Now we want to match every blue to a red point while minimizing the cost, i.e., we look for a minimum cost perfect matching in a bipartite graph. This problem is polynomially solvable, e.g., by the Hungarian method. 

\subsection{Fairlets and fair assignment for weighted inputs}\label{sec:appendix:weighted}
\begin{algorithm}
\caption{Fairlet computation (weighted)}
\label{alg:fairlets:weighted}
\let\oldnl\nl
\newcommand{\nonl}{\renewcommand{\nl}{\let\nl\oldnl}}
1: Let $B$ be the blue points and $R$ be the red points\\
2: For any $b \in B$, $r \in R$, set $c(r,b) = ||r-b||^2/2$\\
3: Construct a complete bipartite network where every $b \in B$ is a source with supply $w(b)$,
every $r\in R$ is a sink with demand $w(b)$, every $b$ is connected to every $r$ with a directed edge of infinite capacity\\
4: Compute an integral minimum cost flow with respect to the edge costs $c$ that satisfies all supplies and demands\\
5: For all $r \in R,b \in B$, add $\mu(\{r,b\})$ to $F$ with weight $f((r,b))$\\
6: Output $F$
\end{algorithm}

\begin{algorithm}
\caption{Fair assignment (weighted)}
\label{alg:fairassignment:weighted}
\let\oldnl\nl
\newcommand{\nonl}{\renewcommand{\nl}{\let\nl\oldnl}}
1: Let $B$ be blue points, $R$ red points, $w$ the weights and $C$ the centers\\
2: $\forall b \in B, r \in R$, set $c(r,b) = \min_{c\in C}||r-c||^2+||b-c||^2$  and $cen(r,b)=\arg\min{c \in C} ||r-c||^2+||b-c||^2$\\
3: Construct a complete bipartite network where every $b \in B$ is a source with supply $w(b)$,
every $r\in R$ is a sink with demand $w(b)$, every $b$ is connected to every $r$ with a directed edge of infinite capacity\\
4: Compute an integral minimum cost flow with respect to the edge costs $c$ that satisfies all supplies and demands\\
5: For all $r \in R,b \in B$, assign $f((r,v))$ weight from $r$ and $b$ to $cen(r,b)$\\
6: Output $F$
\end{algorithm}Notice that we want to use our algorithms on coresets, i.e., on weighted inputs. Thus, we want to solve the assignment in a setting where the points have (splittable) weights, without incurring a runtime that depends on the vertex weights. In this context, it is more convenient to view the problem as a minimum cost flow problem. The blue points are the sources, the red points are the sinks, and the supplies and demands correspond to the point weights. The minimum cost flow problem can be solved in polynomial time and for integral capacities, supplies and demands, there is an optimal flow which is integral. A strongly polynomial algorithm computing an integral min cost flow is Enhanced Capacity Scaling with a running time of $\mathcal{O}((m \log n)(m+ n\log n))$. In our setting, $m \in \Theta(n^2)$, so the running time is $\mathcal{O}(n^4 \log n)$. The standard text book~\cite{AMO93} provides a detailed description of this and several other minimum cost flow algorithms. 

\begin{observation}
The fair assignment step can be performed optimally in strongly polynomial time, even if the points are weighted. The same is true for the fairlet computation.
\end{observation}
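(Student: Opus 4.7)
The plan is to reduce both tasks to the minimum cost flow problem on the bipartite networks already described in Algorithms~\ref{alg:fairlets:weighted} and \ref{alg:fairassignment:weighted}, and then invoke a strongly polynomial min-cost flow solver. Concretely, for the fairlet computation I would build a bipartite graph with blue points on one side and red points on the other, declare each $b \in B$ a source with supply $w(b)$, each $r \in R$ a sink with demand $w(r)$, place an uncapacitated directed arc from every $b$ to every $r$ with cost $c(r,b) = \|r-b\|^2/2$, and compute an integral min-cost feasible flow $f$. The value $f((b,r))$ is then interpreted as the number of weighted ``copies'' of $b$ that are paired with copies of $r$ inside a single fairlet, and each such pairing contributes a new point at $\mu(\{r,b\})$ to the output $F$ with weight $f((b,r))$. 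The fair assignment reduction is identical, except that the arc cost is $c(r,b)=\min_{c\in C}\|r-c\|^2 + \|b-c\|^2$ and the flow value on $(b,r)$ tells us how much combined weight of $r$ and $b$ to route to the common center $cen(r,b)$.

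Next I would argue correctness in two steps. First, because the total blue supply equals the total red demand (the instance is exactly balanced by assumption), the flow problem is feasible; any feasible integral flow corresponds, via the multi-set interpretation above, to a legal pairing of the weighted blue and red multisets into bichromatic fairlets, and conversely every such pairing induces a feasible integral flow of the same cost. Hence a min-cost integral flow yields an optimal weighted fairlet decomposition, and in the assignment case an optimal assignment respecting the exact balance constraint at each center. Second, because all supplies, demands, and (implicit) capacities are integral, classical results on min-cost flow guarantee the existence of an integral optimum, so the integrality required by Algorithms~\ref{alg:fairlets:weighted} and \ref{alg:fairassignment:weighted} is automatic.

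For the running time I would invoke a strongly polynomial min-cost flow algorithm such as Enhanced Capacity Scaling, which runs in $O((m\log n)(m+n\log n))$ on a network with $n$ nodes and $m$ arcs; since $m\in \Theta(n^2)$ here, the overall time is $O(n^4 \log n)$. Crucially, the running time is polynomial in the \emph{number} of weighted points rather than in the total weight, which is exactly the property we need when operating on coresets whose weights can be as large as the number of original points. This yields the observation.

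The main obstacle is verifying that the min-cost flow formulation really corresponds to the combinatorial objects we care about, namely weighted fairlets and fair assignments. In the unweighted case this is just a min-cost bipartite perfect matching, but in the weighted case one must be careful that splitting the weight of a coreset point across several fairlets (or several centers) is semantically permitted. This is indeed allowed by our coreset framework, where a weighted point represents a heap of identical unit-weight points that may be partitioned freely; once this is observed, the equivalence between integral flows and legal splittable assignments is a direct bookkeeping exercise, and strong polynomiality follows from the cited min-cost flow algorithm.
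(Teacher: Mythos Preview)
Your proposal is correct and follows essentially the same approach as the paper: model the weighted fairlet computation and the weighted fair assignment as bipartite minimum cost flow problems (blue points as sources with supply $w(b)$, red points as sinks with demand $w(r)$, uncapacitated arcs with the appropriate edge costs), appeal to integrality of optimal flows under integral data, and invoke Enhanced Capacity Scaling to obtain a strongly polynomial running time of $O(n^4\log n)$ that depends only on the number of weighted points. Your write-up is slightly more explicit than the paper about the correspondence between integral flows and legal weighted pairings (and about why splitting coreset weights is permitted), but the argument is the same.
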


\subsection{\texttt{CKLV-\texorpdfstring{$k$}{k}-means++} and \texttt{Reassigned-CKVL}}
We can now combine the fairlet computation with $k$-means++. 
\begin{corollary}
Algorithm~\ref{alg:fairkmeanspp} computes an expected $\mathcal{O}(\log k)$-approximation for the fair $k$-means problem.
\end{corollary}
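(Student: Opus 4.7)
The plan is to combine Theorem~\ref{thm:kmeansfairlets} with the known approximation guarantee of $k$-means++ (the seeding procedure of Arthur and Vassilvitskii) in a black-box fashion. Theorem~\ref{thm:kmeansfairlets} produces a set of fairlet representatives $F \subset P$ with the property that any $\alpha$-approximate $k$-means clustering of $F$ yields a $(5.5\alpha+1)$-approximation to the fair $k$-means cost of $P$. Algorithm~\ref{alg:fairkmeanspp} first computes $F$ via Algorithm~\ref{alg:fairlets:mainpart} and then runs $k$-means++ on $F$, so it suffices to invoke the standard $k$-means++ analysis from~\cite{AV07}, which guarantees an $8(\ln k + 2)$-approximation in expectation for the (colorless) $k$-means problem on its input.

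First, I would cite Theorem~\ref{thm:kmeansfairlets} to reduce the problem to analyzing the quality of the unconstrained clustering on $F$. Second, I would invoke the expected approximation factor $\alpha = \mathcal{O}(\log k)$ of $k$-means++ seeding on $F$; note that the Lloyd refinement step following the seeding can only decrease the cost, so the expected approximation ratio is preserved. Third, plugging this $\alpha$ into the $(5.5\alpha+1)$ bound gives an expected $\mathcal{O}(\log k)$-approximation for $\costf(P,C)$, where the expectation is over the random choices of $k$-means++ seeding (the fairlet step in Algorithm~\ref{alg:fairlets:mainpart} is deterministic).

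There is one mild subtlety to address: the approximation factor in Theorem~\ref{thm:kmeansfairlets} is stated for deterministic $\alpha$-approximations, while $k$-means++ only gives an $\alpha = \mathcal{O}(\log k)$ in expectation. However, the guarantee of Theorem~\ref{thm:kmeansfairlets} is a pointwise statement: for any set of centers $C$ returned by the algorithm, the induced fair cost on $P$ is at most $5.5 \cost(F,C) + c(M)$, where $c(M) \le \text{OPT}$ is the fairlet matching cost. Taking expectation over the randomness of $k$-means++ and using linearity of expectation, $\mathbb{E}[\cost(F,C)] \le \mathcal{O}(\log k) \cdot \text{OPT}_F \le \mathcal{O}(\log k) \cdot \text{OPT}$, the last step using that the optimal fair $k$-means solution induces a feasible (in fact, cheaper) clustering of $F$. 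Combining gives $\mathbb{E}[\costf(P, C)] \le \mathcal{O}(\log k) \cdot \text{OPT}$.

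I do not expect any real obstacle here; the statement is essentially a corollary of Theorem~\ref{thm:kmeansfairlets} together with the standard $k$-means++ guarantee, and the only care needed is to handle the expectation correctly and to observe that $\text{OPT}_F \le \text{OPT}$ (which follows from the argument already given in the proof of Theorem~\ref{thm:kmeansfairlets}, where it is shown that the fairlet matching cost is at most $\text{OPT}$ and that any optimal pairing provides a feasible solution on $F$).
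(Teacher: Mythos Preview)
Your approach is correct and is exactly what the paper intends (the corollary is stated without proof, as an immediate consequence of Theorem~\ref{thm:kmeansfairlets} and the $k$-means++ guarantee from~\cite{AV07}); your treatment of the expectation via the pointwise bound plus linearity is the right way to make this precise. Two harmless imprecisions: the pointwise inequality from the proof of Theorem~\ref{thm:kmeansfairlets} is $\costf(P,C)\le 2\cost(F,C)+c(M)$ (the factor $5.5$ appears only after bounding $\cost(F,\cdot)$ against $\text{OPT}$), and the claim $\text{OPT}_F\le \text{OPT}$ is not what that proof establishes --- it gives $2\,\text{OPT}_F\le 5.5\,\text{OPT}$ via $2\cost(F,C)\le 5.5\,\costf(P,C)$ applied at the optimal fair centers --- but either constant is absorbed into $\mathcal{O}(\log k)$.
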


\begin{algorithm}
\caption{CKLV-$k$-means++}
\label{alg:fairkmeanspp}
\let\oldnl\nl
\newcommand{\nonl}{\renewcommand{\nl}{\let\nl\oldnl}}
1: Compute fairlet representatives $F$ with Algorithm~\ref{alg:fairlets} or \ref{alg:fairlets:weighted}\\
2: Run Algorithm~\ref{alg:kmeanspp} on $F$
\end{algorithm}

\begin{algorithm}
\caption{Reassigned-CKVL}
\label{alg:reassigned-ckvl}
\let\oldnl\nl
\newcommand{\nonl}{\renewcommand{\nl}{\let\nl\oldnl}}
1: Compute a center set $C$ with Algorithm~\ref{alg:fairkmeanspp}\\
2: Assign all points in $P$ to $C$ with Algorithm~\ref{alg:fairassignment} or \ref{alg:fairassignment:weighted}\\
\end{algorithm}

\subsection{Fair \texorpdfstring{$k$}{k}-means}

Finally, we want to adapt Lloyd's algorithm
 to the setting of fair $k$-means clustering. 
The idea behind this is to avoid the (constant-factor) loss of first computing fairlets and then using an approximation algorithm. 
As initialization, we use the $k$-means++ seeding on the fairlets, which guarantees that the outcome is a $\mathcal{O}(\log k)$ approximation on expectation. For the seeding, we do not perform Lloyd on the fairlets, but only the seeding part of $k$-means++.

When computing a fair clustering, it is the assignment step of Lloyd's algorithm which becomes nontrivial. Changing the center of a cluster to its centroid does not violate the fairness constraint, and it is still the optimal choice for the cluster. The assignment step, however, does no longer produce legal clusterings. We have to replace it with the fair assignment step. 

\begin{algorithm}
\caption{Fair $k$-means++}
\label{alg:fairkmeans}
\let\oldnl\nl
\newcommand{\nonl}{\renewcommand{\nl}{\let\nl\oldnl}}
1: Compute initial centers $C_0$ by Algorithm~\ref{alg:fairkmeanspp}\\
2: For all $i\ge 0$, unless a stopping criterion is met:\\
3: \quad Assign every point to a center $C_i$ by evoking Algorithm~\ref{alg:fairassignment} or \ref{alg:fairassignment:weighted}, partitioning $P$ into $P_i^1,\ldots,P_i^k$\\
4: \quad Set $C_{i+1}=\{ \mu(P_i^j) \mid j \in [k]\}$
\end{algorithm}

\subsection{A PTAS for fair \texorpdfstring{$k$}-means}\label{sec:ptas}

We next give an algorithm to efficiently compute a $(1+\epsilon)$-approximation. We remark that the running time of this
algorithm is worse than that of \cite{BJK18,DX15}. However, it can be easily adapted to work with weighted inputs. While we believe that in principle adapting the algorithms in \cite{BJK18,DX15} to the weighted case 
is possible, we preferred to stick to the simpler slightly worse result to keep the paper concise. 

\begin{restatable}{theorem}{inabatheorem}
Let $P\subseteq \REAL^d$ be a weighted point set of $n$ points such that half of the point weight is red and the other half is blue. Then we can compute a $(1+\epsilon)$-approximations to the fair $k$-means problem in time $n^{O(k/\epsilon)}$.
\end{restatable}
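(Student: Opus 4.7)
The plan is to mimic the classical Inaba-style sampling PTAS for $k$-means, with the unconstrained assignment step replaced by the weighted fair assignment routine (Algorithm~\ref{alg:fairassignment:weighted}).

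First, I would invoke the weighted version of Inaba's lemma: for any weighted point set $S\subseteq \REAL^d$ with centroid $\mu(S)$, and for $m = \lceil 4/\varepsilon\rceil$, if one draws a multiset $T$ of $m$ elements of $S$ with probabilities proportional to the weights, then the unweighted arithmetic mean $\bar T$ of $T$ satisfies $\cost_w(S,\bar T)\le (1+\varepsilon)\,\cost_w(S,\mu(S))$ with probability at least $1/2$. In particular, for every weighted set $S$ there \emph{exists} a multiset of $m$ (not necessarily distinct) elements of $S$ whose unweighted mean is a $(1+\varepsilon)$-approximate $1$-mean of $S$.

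Next, let $(C_1^*,\ldots,C_k^*)$ be an optimal fair clustering of $P$ with optimal centers $\mu_i^*:=\mu(C_i^*)$. Applying the above to each $C_i^*$ separately yields multisets $T_i^*\subseteq C_i^*$ of size $m$ such that the unweighted means $c_i:=\mu(T_i^*)$ satisfy $\cost_w(C_i^*,c_i)\le (1+\varepsilon)\,\cost_w(C_i^*,\mu_i^*)$. Summing over $i$, the center set $C=\{c_1,\ldots,c_k\}$ together with the (fair) partition $(C_1^*,\ldots,C_k^*)$ achieves total cost at most $(1+\varepsilon)\cdot\costf(P,\{\mu_1^*,\ldots,\mu_k^*\})$. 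Since this partition is feasible for $C$, the \emph{optimal} fair assignment of $P$ to $C$ costs at most $(1+\varepsilon)\cdot\mathrm{OPT}$.

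The algorithm is therefore: enumerate every $k$-tuple $(T_1,\ldots,T_k)$ of size-$m$ multisets drawn from $P$; for each tuple, compute the $k$ candidate centers as the unweighted means $\mu(T_i)$, then compute the optimal fair assignment of $P$ to these centers via Algorithm~\ref{alg:fairassignment:weighted}. The output is the cheapest resulting clustering. Since the number of size-$m$ multisets from $n$ points is $\binom{n+m-1}{m}\le n^{O(1/\varepsilon)}$, the total number of $k$-tuples is $n^{O(k/\varepsilon)}$, and each iteration is polynomial in $n$ (dominated by the min-cost flow in the fair assignment), yielding the claimed running time.

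The main subtlety to handle carefully is the weighted Inaba statement: the sampling must be \emph{proportional to weights}, not uniform, since otherwise heavy points in an optimal cluster might be underrepresented and the sample mean could be far from the centroid. However, because the algorithm enumerates \emph{all} multisets of size $m$ from $P$, the sampling distribution is used only in the analysis to argue existence of one good tuple; the algorithm itself is deterministic. A secondary point is that Inaba's guarantee speaks only about the $1$-mean cost of a single cluster against its own centroid, so one must check that coupling the guaranteed centers with the (already fair) partition $(C_1^*,\ldots,C_k^*)$ produces a feasible witness for the subsequent fair-assignment step, which it does, so replacing that witness by the optimal fair assignment can only decrease the cost.
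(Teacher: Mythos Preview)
Your proposal is correct and follows essentially the same approach as the paper: enumerate all candidate center sets obtained as centroids of small (multi)subsets of $P$ via Inaba's lemma, and for each candidate set invoke the weighted fair-assignment routine, returning the best solution. The only cosmetic differences are that the paper enumerates a single set of size $k\lceil 2/\varepsilon\rceil$ and then partitions it into $k$ pieces (rather than directly enumerating $k$-tuples of multisets), and that you are more explicit than the paper about the weighted extension (proportional sampling, multisets) and about why the optimal fair partition serves as a feasible witness; neither changes the argument.
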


We will use the well-known fact that every cluster has a subset of $O(1/\epsilon)$ points such that their centroid is a $(1+\epsilon)$-approximation to the centroid of the cluster. We use the following lemma 
by Inaba et al. 

\begin{lemma}\cite{IKI94}
Let $P \subseteq \REAL^d$ be a set of points and let $S$ be a subset of $m$ points drawn independently and
uniformly at random from $P$. Let $c(P)= \frac{1}{|P|} \sum_{p\in P} p$ and  $c(S)= \frac{1}{|S|} \sum_{p\in S} p$ be the centroids of $P$ and $S$.
Then with probability at least $1-\delta$ we have
$$
\| \sum_{p\in P} \|p-c(S)\|_2^2 \le (1+\frac{1}{\delta m}) \cdot \|\sum_{p\in P} \|p-c(P)\|_2^2
\enspace.
$$
\end{lemma}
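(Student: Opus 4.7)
The strategy is the standard Inaba-style centroid-enumeration PTAS, adapted to respect (i) point weights and (ii) the fairness assignment constraint. Fix an unknown optimal fair clustering $P_1^*,\dots,P_k^*$ with centroids $c_1^*,\dots,c_k^*$; the key observation is that we never need to recover this clustering, only a center set that is competitive with $c_1^*,\dots,c_k^*$, because once we fix centers the optimal fair assignment can be computed exactly via Algorithm~\ref{alg:fairassignment:weighted} in polynomial time.

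First, I would set $m = \lceil 2/\varepsilon\rceil$ and, for each $i\in\{1,\dots,k\}$, draw an i.i.d.\ sample $S_i$ of $m$ points from $P_i^*$ with probability proportional to weight (equivalently: expand weighted points into unit copies and sample uniformly). Applying the Inaba-style lemma to the weighted point set $P_i^*$ (the lemma extends verbatim from unweighted to integer-weighted sets by duplication), the centroid $c(S_i)$ satisfies
\[
\sum_{p\in P_i^*} w(p)\,\|p-c(S_i)\|^2 \le (1+\varepsilon)\sum_{p\in P_i^*} w(p)\,\|p-c_i^*\|^2
\]
with probability at least $1/2$. By independence, all $k$ guarantees hold simultaneously with probability at least $2^{-k}$; we will remove this probability by enumeration.

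Since we cannot actually draw from the unknown $P_i^*$, I would enumerate instead: the algorithm tries every ordered tuple $(S_1,\dots,S_k)$ where each $S_i$ is a multiset of $m$ points drawn from $P$ (treating weighted points as potentially repeatable). The number of such tuples is at most $n^{km}=n^{O(k/\varepsilon)}$. For each enumerated tuple we form the candidate center set $C=\{c(S_1),\dots,c(S_k)\}$, invoke Algorithm~\ref{alg:fairassignment:weighted} on $(P,C)$ to obtain the optimal fair assignment cost $\costf(P,C)$, and keep the best $C$ seen. One of the enumerated tuples coincides with the "good" sample (the one satisfying the Inaba bounds for every $i$ simultaneously), and for that tuple the partition $P_1^*,\dots,P_k^*$ is a feasible fair assignment of $P$ to $C$ whose cost is at most $(1+\varepsilon)\,\mathrm{OPT}$; the fair-assignment step can only do better, so the returned solution is a $(1+\varepsilon)$-approximation.

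The main obstacle is the weighted/splittable nature of the input: Inaba's lemma as stated is for unweighted points, and with arbitrary (possibly huge) integer weights one cannot afford to expand and sample explicitly. I would handle this by (a) observing that the lemma's proof only uses the identity $\mathrm{Var}[c(S)]=\mathrm{Var}(P)/m$, which holds for weighted averages as well, and (b) noting that for the enumeration we do not actually need to draw from weights at all: the candidate multiset of centroid-seeds is picked from the support of $P$, and repetition in the multiset accounts for the possibility that the "true" sample would have hit a heavy weighted point several times. Bounding the multisets of size $m$ over a support of size $n$ by $n^m$ gives the claimed $n^{O(k/\varepsilon)}$ overall running time, with the per-candidate polynomial cost of the min-cost-flow-based fair assignment absorbed into the exponent's constant.
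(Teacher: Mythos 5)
You have proved the wrong statement. The lemma in question is the Inaba--Imai--Katoh sampling bound itself: for an i.i.d.\ uniform sample $S$ of size $m$ from $P$, with probability at least $1-\delta$ one has $\sum_{p\in P}\|p-c(S)\|^2 \le (1+\frac{1}{\delta m})\sum_{p\in P}\|p-c(P)\|^2$. Your proposal instead develops the downstream PTAS (enumeration of $n^{O(k/\epsilon)}$ candidate centroid seeds combined with the weighted fair-assignment step), which in the paper is a separate theorem whose proof \emph{uses} this lemma as a black box --- and indeed the paper does not reprove the lemma at all, it cites~\cite{IKI94}. As a proof of the lemma, your text is essentially circular: your first paragraph applies ``the Inaba-style lemma'' as a given, which is exactly what was to be shown.

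The only part of your proposal that bears on the actual statement is the parenthetical remark that the proof ``only uses the identity $\mathrm{Var}[c(S)]=\mathrm{Var}(P)/m$''. That is the right starting point, but it is not a proof; three steps are missing. First, the bias--variance decomposition of Proposition~\ref{prop:zauberformel}: $\sum_{p\in P}\|p-c(S)\|^2=\sum_{p\in P}\|p-c(P)\|^2+|P|\cdot\|c(S)-c(P)\|^2$, which isolates the error as the single scalar quantity $\|c(S)-c(P)\|^2$. Second, the expectation computation $\mathbb{E}\bigl[\|c(S)-c(P)\|^2\bigr]=\frac{1}{m}\cdot\frac{1}{|P|}\sum_{p\in P}\|p-c(P)\|^2$, valid because the $m$ samples are independent and uniform, so variances add coordinatewise and $c(S)$ is an unbiased estimator of $c(P)$. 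Third, Markov's inequality applied to the nonnegative random variable $\|c(S)-c(P)\|^2$, giving $\|c(S)-c(P)\|^2\le \frac{1}{\delta m}\cdot\frac{1}{|P|}\sum_{p\in P}\|p-c(P)\|^2$ with probability at least $1-\delta$; substituting this into the decomposition yields exactly the factor $1+\frac{1}{\delta m}$. None of these appears in your write-up, and in particular you never identify where the probability $1-\delta$ comes from (it is Markov, not independence across clusters or enumeration). Your PTAS material is reasonable as far as it goes, but it belongs to the theorem that follows the lemma, not to the lemma itself.
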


It immediately follows that for $m=\lceil 2/\epsilon \rceil$ there exists a subset $S$ of $m$ points that
satisfies the above inequality. The result can immediately be extended to the weighted case. This implies the following algorithm.

\begin{algorithm}
\caption{PTAS for fair $k$-means++}
\label{alg:k-meanstheory}
\let\oldnl\nl
\newcommand{\nonl}{\renewcommand{\nl}{\let\nl\oldnl}}
{\bf Input:} (Weighted) point set $P \subseteq \REAL^d$ \\
1: Consider all subsets $S \subseteq P$ of size $k \cdot \lceil 2/\epsilon \rceil$.\\
2: Partition $S$ into $k$ sets $C_1,\dots, C_k$ of size $\lceil 2/\epsilon \rceil$.\\
3: Solve the fair assignment problem for $P$ and $c(C_1), \dots, c(C_k)$ (with Algorithm~\ref{alg:fairassignment} or \ref{alg:fairlets:weighted},respectively)\\
4: Return the best solution computed above\\
\end{algorithm}

The running time of the algorithm is $n^{O(k/\epsilon)}$ since line two can be implemented in 
$k^{O(k/\epsilon)}$ time and the partition problem can be solved in $n^{O(1)}$ time. This implies the theorem.

\subsection{A Streaming PTAS for Small \texorpdfstring{$k$}{k}}
\label{sec:dimred}

We would like to extend the PTAS to the streaming setting, using our coreset. Applying Corollary~\ref{cor:fairbico} directly incurs an exponential dependency on the dimension $d$. 
The standard way to avoid this is to project the data onto the first $k/\varepsilon$ principal components, see~\cite{CohenEMMP15,FeldmanSS13}, and then to use a technique called merge-and-reduce. 
Unfortunately, merge-and-reduce technique requires a rescaling of $\varepsilon$ by a factor of $\log n$. In other words, the resulting streaming coreset will have a size $\exp(\left(\frac{\log n}{\varepsilon}\right),{k\cdot \frac{\log n}{\varepsilon}})$, which is even larger than the input size. 
To avoid this, we show how to make use of oblivious random projections to reduce the dependency of the dimension for movement-based coreset constructions, and also recover a $(1+\varepsilon)$ approximate solution.

We review some of the algebraic properties.
Given a matrix $A\in \mathbb{R}^{n\times d}$, we define the Frobenius norm as $\|A\|_F = \sqrt{\sum_{i=1}^n \|A_{i*}\|^2}$, where $A_{i*}$ is the $i$th row of $A$.
For $k$-means, we will consider the rows of $A$ to be our input points.
The spectral norm $\|A\|_2$ is the largest singular value of $A$. 

Let us now consider the $n$-vector $x=\mathbf{1}\cdot \frac{1}{\sqrt{n}}$. $x$ is a unit vector, i.e. $\|x\|_2 = 1$, and moreover, due to Proposition~\ref{prop:zauberformel}, the rows of $xx^TA$ are $\mu(A)^T$. Hence $\|A-xx^TA\|_F^2$ is the optimal $1$-means cost of $A$. This may be extended to an arbitrary  number of centers by considering the $n$ by $k$ clustering matrix $X$ with $X_{i,j}=\begin{cases} \sqrt{1/|C_j|} & \text{if } A_{i*}\in \text{ cluster }C_j\\
0 &\text{otherwise}\end{cases}.$
$XX^T$ is an orthogonal projection matrix and $\|A-XX^TA\|_F^2$ corresponds to the $k$-means cost of the clusters $C_1,\ldots ,C_k$.
If we lift the clustering constraints on $X$ and merely assume $X$ to be orthogonal and rank $k$,  $\|A-XX^TA\|_F^2$ becomes the rank $k$ approximation problem. The connection between rank $k$ approximation and $k$-means is well established, see for 
example~\cite{BZMD15,DrineasFKVV04,FeldmanSS13}. Specifically, we aim for the following guarantee.
\begin{definition}[Definition 1 of~\cite{CohenEMMP15}]
\label{def:sketch}
$\tilde{A}\in \mathbb{R}^{n\times d'}$ is a rank $k$-projection-cost preserving sketch of $A\in \mathbb{R}^{n\times d}$ with error $0<\varepsilon<1$ if, for all rank $k$ orthogonal projection matrices $XX^T\in\mathbb{R}^{n\times n}$,
\[\|\tilde{A}-XX^T\tilde{A}\|_F^2 + c \in (1\pm\varepsilon)\cdot \|A-XX^TA\|_F^2,\] 
for some fixed non-negative constant $c$ that may depend on $A$ and $\tilde{A}$, but is independent of $XX^T$.
\end{definition}
Our choice of $\tilde{A}$ is $AS$, where $S$ is a scaled Rademacher matrix of target dimension $m\in O(k/\varepsilon^2)$, see Theorem 12 of~\cite{CohenEMMP15}. 
In this case $c=0$. 
\begin{algorithm}
\caption{Dimension-Efficient Coreset Streaming}
\label{alg:movementsketch}
\let\oldnl\nl
\newcommand{\nonl}{\renewcommand{\nl}{\let\nl\oldnl}}
{\bf Input:} Point set $A$ processed in a stream\\
1: Initialize $S\in\mathbb{R}^{d\times m}$\\
2: Maintain a movement-based coreset $T$ of $AS$ \\ 
3: Let $\pi^{-1}(T_{i*})$ be the set of rows of $AS$ that are moved to $T_{i*}$\\
4: Let $\pi^{-1}_A(T_{i*})$ be the set of corresponding rows of $A$\\
5: Maintain $|\pi^{-1}_A(T_{i*})|$ and the linear sum $L(T_{i*})$ of the rows in $\pi^{-1}_A(T_{i*})$\\
6: Solve $(\alpha,\beta)$-fair $k$-means on $T$ using a $\gamma$-approximation algorithm $\leadsto$ clustering $C_1,\dots,C_k$ 
\\
7: For each cluster $C_j$ return the center $\frac{1}{\sum_{T_{i*} \in C_j} |\pi^{-1}_A(T_{i*})|} \cdot \sum_{T_{i*} \in C_j} L(T_{i*})$ \\
\end{algorithm}

We combine oblivious sketches with movement-based coreset constructions in Algorithm~\ref{alg:movementsketch}.
The general idea is to run the coreset construction on the rows of $AS$ (which are lower dimensional points). Since the dimensions of $AS$ are $n$ times $k/\varepsilon^2$, 
this has the effect that we can replace $d$ in the coreset size by $O(k/\varepsilon^2)$. Furthermore, we show
that by storing additional information we can still compute an approximate solution for $A$ (the challenge is that
although $AS$ will approximate preserve clustering costs, the cluster centers that achieve this cost lie in a different
space and cannot be used directly as a solution for $A$).

\thmreduction*
\begin{proof}
Let $X$ be the optimal clustering matrix on input $A'S$ and $Y$ be the optimal clustering matrix for
input $A$. Let $Z$ be the clustering matrix returned by our $(\alpha,\beta)$-fair approximation algorithm on input $A'S$ (or, equivalently, 
on input $T$). Let $\varepsilon' = \varepsilon/25$.
Since we are using a $\gamma$-approximation algorithm, we know that $\|ZZ^TA'S - A'S\|_F^2 \le\gamma \cdot \|XX^TA'S-A'S\|_F^2$.
We also observe that $\|ZZ^TA'S - ZZ^TAS\|_F \le \|ZZ^T\|_2 \|A'S-AS|_F = \|A'S -AS\|_F$ for an orthogonal projection matrix $ZZ^T$.
Furthermore, we will use that $\|XX^TA'S - A'S\|_F \le \|XX^T A'S - XX^T AS\|_F + \|XX^TAS - AS\|_F + \|A'S-AS\|_F$ and the fact that the spectral norm and Frobenius norm are conforming, e.g., they satisfy $\|AB\|_F \leq \|A\|_2\|B\|_F$.
We obtain
\begin{eqnarray*}
& &(1-\varepsilon')\cdot \|ZZ^TA-A\|_F^2 
\leq  \|ZZ^TAS-AS\|_F^2  \\
&\leq & \left(\|ZZ^TA'S-ZZ^TAS\|_F +\|ZZ^TA'S-AS\|_F \right)^2  \\
& \leq & (\|ZZ^TA'S - ZZ^TAS\|_F + \|ZZ^TA'S - A'S\|_F \\
& & + \|A'S-AS\|_F)^2 \\
&\leq & \left(2\|A'S-AS\|_F + \sqrt{\gamma} \|XX^TA'S-A'S\|_F\right)^2  \\
& \leq & ( (2+\sqrt{\gamma}) \|A'S-AS\|_F  \\ 
& &  + \sqrt{\gamma} ( \|XX^T A'S - XX^T AS\|_F  + \\ 
& & \|XX^TAS - AS\|_F))^2 \\
& \leq & ((2 + 2\sqrt{\gamma}) \|A'S-AS\|_F +\sqrt{\gamma}\|XX^TAS - AS\|_F)^2\\
& \leq & ((\frac{\varepsilon'}{4} ( 2+2\sqrt{\gamma}) + \sqrt{\gamma}) \|XX^TAS - AS\|_F)^2 \\
& \leq & ((1+\varepsilon') \sqrt\gamma)\|XX^TAS - AS\|_F)^2 \\
& \leq & (1+\varepsilon')^2 \gamma \|YY^TAS -AS\|_F^2\\
& \leq & (1+\varepsilon')^3 \gamma \|YY^T A-A\|_F^2
\end{eqnarray*}
where the first and the last inequality follows from the guarantee of Definition~\ref{def:sketch} and Theorem 12 of~\cite{CohenEMMP15}.
To conclude the proof, observe that $(1+\varepsilon')^3 /{1-\varepsilon} \leq (1+25\varepsilon') = (1+\varepsilon)$ for $0<\varepsilon<\frac{1}{2}$. 
\end{proof}

\section{Acknowledgements}
Chris Schwiegelshohn acknowledges the support of the ERC Advanced Grant 788893 AMDROMA.

\bibliographystyle{amsalpha}

\bibliography{arxiv-references}

\end{document}